\theoremstyle{plain}
\newtheorem{theorem}{Theorem}
\pgfplotsset{compat=1.12}
\newcolumntype{R}{>{\raggedleft\arraybackslash}p{1.15cm}}
\title{Solving Large-Scale Minimum-Weight Triangulation Instances to Provable Optimality}
\author[1]{Andreas Haas}
\affil[1]{Department of Computer Science, TU Braunschweig, 38106 Braunschweig, Germany.
  \texttt{haas@ibr.cs.tu-bs.de}}
\DeclareMathOperator{\MWT}{MWT}
\DeclareMathOperator{\dist}{d}
\DeclareMathOperator{\sign}{sign}
\begin{document}

\maketitle

\begin{abstract}
We consider practical methods for the problem of finding a minimum-weight triangulation (MWT) of a planar point set, a classic problem of computational geometry with many applications. While Mulzer and Rote proved in 2006 that computing an MWT is NP-hard, Beirouti and Snoeyink showed in 1998 that computing provably optimal solutions for MWT instances of up to 80,000 {\em uniformly distributed} points is possible, making use of clever heuristics that are based on geometric insights.
We show that these techniques can be refined and extended to instances of much bigger size and different type, based on an array of modifications and parallelizations in combination with more efficient geometric encodings and 
data structures. As a result, we are able to solve MWT instances with up to 30,000,000 uniformly distributed points in less than 4 minutes to provable optimality. Moreover, we can compute optimal solutions for a vast array of other 
benchmark instances that are {\em not} uniformly distributed, including normally distributed instances (up to 30,000,000 points), all point sets in the TSPLIB (up to 85,900 points), and VLSI instances with up to 744,710 points. This demonstrates that from a practical point of view, MWT instances can be handled quite well, despite their theoretical difficulty.
\end{abstract}
\section{Introduction}

Triangulating a set of points in the plane is a classic problem in computational geometry: given a planar point set $S$, find a maximal set of non-crossing line segments connecting the points in $S$. Triangulations have many real-world applications, for example in terrain modeling, finite element mesh generation and visualization.
In general, a point set has exponentially many possible triangulations and a natural question is to ask for a triangulation that is optimal with respect to some optimality criterion. Well known and commonly used is the Delaunay triangulation, which optimizes several criteria at the same time: it maximizes the minimum angle and minimizes both the maximum circumcircle and the maximum smallest enclosing circle of all triangles.
Another natural optimality criterion, and the one we are considering in this paper is minimizing the total weight of the resulting triangulation, i.e., minimizing the sum of the edge lengths.

The minimum-weight triangulation ($\MWT$) is listed as one of the open problems in the famous book from 1979 by Garey and Johnson on NP-completeness \cite{garey1979}.
The complexity status remained open for 27 years until Mulzer and Rote \cite{mulzer2006} finally resolved the question and showed NP-hardness of the $\MWT$ problem.

Independently, Gilbert \cite{gilbert1979} and Klincsek \cite{klincsek1980} showed that, when restricting it to simple polygons, the $\MWT$ problem can be solved in $O(n^3)$-time with dynamic programming.
The dynamic programming approach can be generalized to polygons with $k$ inner points. Hoffmann and Okamoto \cite{hoffmann2004} showed how to obtain the $\MWT$ of such a point set in $O(6^k n^5 \log n)$-time. Their algorithm is based on a polygon decomposition through $x$-monotone paths.
Grantson et al. \cite{grantson05} improved the algorithm to $O(n^4 4^k k)$  and showed another decomposition strategy based on cutting out triangles \cite{grantson05triangles} which runs in $O(n^3 k! k)$-time.

A promising approach are polynomial-time heuristics that either include or exclude edges with certain properties from any minimum-weight triangulation.
Das and Joseph \cite{das1989diamondproperty} showed that every edge in a minimum-weight triangulation has the \emph{diamond property}. An edge $e$ cannot be in $\MWT(S)$ if both of the two isosceles triangles with base $e$ and base angle $\pi/8$ contain other points of $S$.
Drysdale et al. \cite{drysdale2001exclusionregion} improved the angle to $\pi/4.6$.
This property can exclude large portions of the edge set and works exceedingly well on uniformly distributed point sets, for which only an expected number of $O(n)$ edges remain.
Dickerson et al.  \cite{dickerson96, dickerson97} proposed the \emph{LMT-skeleton heuristic}, which is based on a simple local-minimality criterion fulfilled by every edge in $\MWT(S)$. The LMT-skeleton algorithm often yields a connected graph, such that the remaining polygonal faces can be triangulated with dynamic programming to obtain the minimum weight triangulation.

Especially the combination of the diamond property and the LMT-skeleton made it possible to compute the $\MWT$ for large, well-behaved point sets. Beirouti and Snoeyink \cite{snoeyink98implementations,beirouti1997thesis} showed an efficient implementation of these two heuristics and they reported that their implementation could compute the exact $\MWT$ of 40,000 uniformly distributed points in less than 5 minutes and even up to 80,000 points with the improved diamond property.

Our contributions:
\begin{itemize}
\item We revisit the diamond test and LMT-skeleton based on Beirouti's and Snoeyink's \cite{snoeyink98implementations,beirouti1997thesis} ideas and describe several improvements. Our bucketing scheme for the diamond test does not rely on a uniform point distribution and filters more edges. 
For the LMT-skeleton heuristic we provide a number of algorithm engineering modifications. They contain a data partitioning scheme for a parallelized implementation and several other changes for efficiency. 
We also incorporated an improvement to the LMT-skeleton suggested by Aichholzer et al. \cite{aichholzer99}.

\item We implemented, streamlined and evaluated our implementation on various point sets. For the uniform case, we computed the $\MWT$ of 30,000,000 points in less than 4 minutes on commodity hardware; the limiting factor arose from the memory of a standard machine, not from the runtime. We achieved the same performance for normally distributed point sets. 
The third class of point sets were benchmark instances from the TSPLIB \cite{reinelt1991} (based on a wide range of real-world and clustered instances) and the VLSI library. These reached a size up to 744,710 points.
This shows that from a practical point of view, wide range of huge $\MWT$ instances can be solved to provable optimality with the right combination of theoretical insight and algorithm engineering.
\end{itemize}

\section{Preliminaries}
Let $S$ be a set of points in the euclidean plane. A triangulation $T$ of $S$ is a \emph{maximal planar} straight-line graph with vertex set $S$.  The weight $w(e)$ of an edge $e$ is its euclidean length.
A minimum-weight triangulation $\MWT(S)$ minimizes the total edge weight, i.e., $\sum_{e \in E(T)} w(e)$.

An edge $e$ is \emph{locally minimal} with respect to some triangulation $T(S)$ if either
\begin{enumerate}
\item[(i)] $e$ is an edge on the convex hull of $S$, or
\item[(ii)] the two triangles bordering $e$ in $T(S)$ form a quadrilateral $q$ such that $q$ is either
not convex or $e$ is the shorter diagonal of the two diagonals $e$ and $e'$ in $q$, i.e., $w(e) \le w(e')$.
\end{enumerate}

A triangulation $T$ is said to be a \emph{locally minimal triangulation} if every edge of $T$ is locally minimal, i.e., the weight of $T$ cannot be improved by edge flipping.
A pair of triangles witnessing local minimality for some edge $e$ in some triangulation is called a \emph{certificate} for $e$.
An \emph{empty triangle} is a triangle that contains no other points of $S$ except for its three vertices.
\section{Previous Tools}
\label{sec::previous_tools}

\subsection{Diamond Property}
\label{ssec::prev::diamond_property}


\begin{figure}[h]
\centering
\includegraphics[width=0.5\columnwidth]{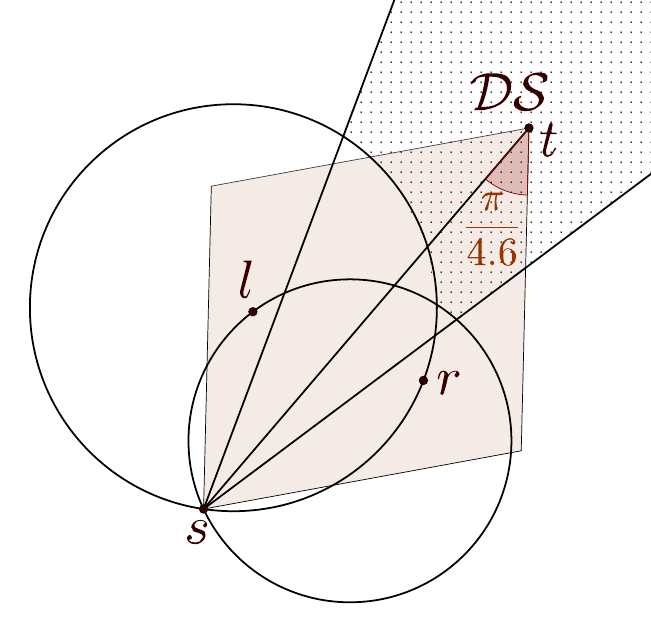}
\caption[Dead sector $\mathcal{DS}$]{Points $l$ and $r$ induce a region $\mathcal{DS}$ such that all edges $e=st$ with $t \in \mathcal{DS}$ fail the diamond test. $\mathcal{DS}$ is called a dead sector (dotted area).}
\label{fig::dead_sector}
\end{figure}

A brute-force solution to test the diamond property for each edge takes $\Theta(n^3)$ time and is inefficient. To accelerate the test, Beirouti and Snoeyink \cite{snoeyink98implementations} use a bucketing scheme based on a uniform grid with the grid size chosen such that on expectation a constant number of points lie in each cell. In order to quickly discard whole cells, they make use of dead sectors, which are illustrated in Figure~\ref{fig::dead_sector}. Suppose we want to test all edges with source $s$, points $l,r$ are already processed and known, then all edges $st$ with $t \in \mathcal{DS}$ will fail the diamond test, because $l$ and $r$ lie in the left, resp.~right isosceles triangle. The boundary of a single dead sector depends on the angle and length of edges $sl$ and $sr$; for multiple sectors it can be quite complicated.
For each point $s$, cells are searched starting at the cell containing $s$ until all cells can be excluded by dead sectors.

\subsection{LMT-Skeleton}
\label{ssec::prev::lmt_skeleton}

The LMT-skeleton was proposed by Dickerson et al. \cite{dickerson96, dickerson97}, it is a subset of the minimum weight triangulation. The key observation is that $\MWT(S)$ is a locally minimal triangulation, i.e., no edge in $\MWT(S)$ can be flipped to reduce the total weight. 

\begin{figure}
\begin{subfigure}{0.54\columnwidth}
\includegraphics[width=0.95\columnwidth]{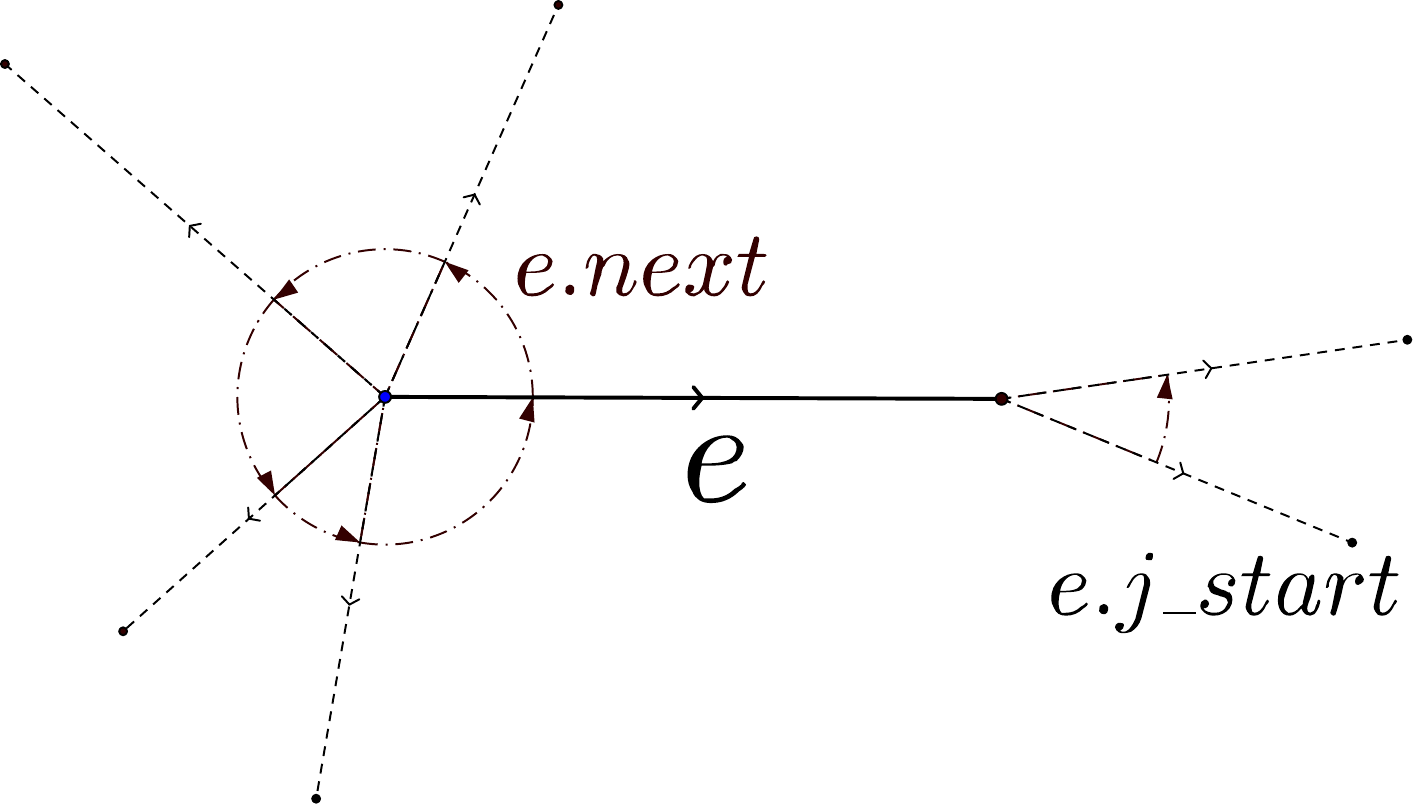}
\caption{Half-edge $e$.}
\label{fig::lmt_halfedge_a}
\end{subfigure}
\begin{subfigure}{0.44\columnwidth}
\includegraphics[width=0.95\columnwidth]{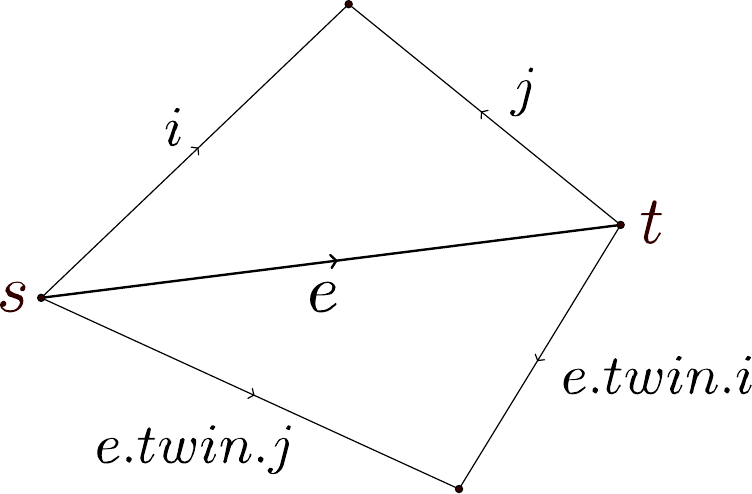}
\caption{$i$ and $j$ point to the current certificate edges.}
\label{fig::lmt_halfedge:ij}
\end{subfigure}
\caption{Representation of half-edge $e$.}
\label{fig::lmt_halfedge}
\end{figure}

In order to avoid the $O(n^3)$ space required to store all empty triangles Beirouti and Snoeyink \cite{snoeyink98implementations} propose a data structure based on half-edges. Half-edges store several pointers: a pointer to the target vertex and the twin half-edge; a pointer to the next edge in counter-clockwise order around the source; and three additional pointers to scan for empty triangles ($i$, $j$, $j\_start$); see Figure \ref{fig::lmt_halfedge} for an illustration. A status flag indicates whether an edge is \emph{possible, impossible} or \emph{certain}.
Furthermore, three additional pointers (\emph{rightPoly, leftPoly, polyWeight}) are stored and used for the subsequent polygon triangulation step.

\begin{algorithm}
\SetKwFunction{Adv}{Advance}

\SetKwProg{function}{Function}{}{}

\function{\Adv{$e$}}{
	\Repeat{$e.i.target = e.j.target$}{
		\While{$e.i.target$ $\textsf{\upshape is not left of}$ $e.j$}{
			$e.i \gets e.i.next$ \;
		}
		\While{$e.i.target$ $\textsf{\upshape is left of}$ $e.j$}{
			$e.j \gets e.j.next$ \;
		}
	}
}
\caption{Advance. Adapted from \cite{beirouti1997thesis} (Changed notation and corrected an error.)}
\label{alg::advance}
\end{algorithm}

At the heart of the LMT-skeleton heuristic lies the \Adv function, see Algorithm~\ref{alg::advance}.
\Adv basically rotates edge $i$ and $j$ in counterclockwise order such that they form an empty triangle if they point to the same vertex, i.e., $i.target = j.target$. Pointers $i$ and $j$ are initialized to $e.next$ resp. $e.j\_start$.
The algorithm to find certificates is built on top of \Adv. All pairs of triangles can be traversed by repeatedly calling \Adv on half-edge $e$ and $e$'s twin in fashion similar to a nested loop. The ``loop'' is stopped when a certificate is found and can be resumed when the certificate becomes invalid.  \cite{beirouti1997thesis, snoeyink98implementations}

After initializing the half-edge data structure, their implementation pushes all edges on a stack (sorted with longest edge on top) and then processes edges in that order.
If for an edge $e$ no certificate is found, an intersection test determines if $e$ lies on the convex hull or if $e$ is impossible.
If $e$ is detected to be impossible, a local scan restacks all edges with $e$ in their certificate. After the stack is empty, all edges that remain possible and that have no crossing edges are marked as \emph{certain}.

\section{Our Improvements and Optimizations}
\subsection{Diamond Property}
\label{sec::diamond_property}

For a uniformly distributed point set $S$ with $|S|=n$ points, the expected number of edges to pass the diamond test is only $O(n)$. More precisely, Beirouti and Snoeyink \cite{snoeyink98implementations} state that the number is less than $3\pi n / \sin(\alpha)$, where $\alpha$ is the base angle for the diamond property. We were able to tighten this value.

\begin{theorem}
Let $S$ be a uniformly distributed point set in the plane with $|S| = n$ and let $\alpha \le \pi/3$ be the base angle for the diamond property. Then the expected number of edges that pass the diamond test is less than $3\pi n / \tan(\alpha)$.
\end{theorem}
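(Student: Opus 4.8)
The plan is a first-moment computation. An edge $e=st$ passes the diamond test precisely when at least one of its two isosceles triangles $T_1,T_2$ (each with base $e$ and base angles $\alpha$, one on each side of $e$) contains no point of $S\setminus\{s,t\}$. For a pair $\{s,t\}$ let $P(e)$ be the probability that $e=st$ passes; by linearity of expectation the expected number of passing edges equals $\sum_{\{s,t\}}P(e)$ over all $\binom n2$ pairs, and by inclusion--exclusion $P(e)=q_1+q_2-q_{12}$, where $q_i$ is the probability that $T_i$ is empty of other points and $q_{12}$ the probability that $T_1$ and $T_2$ are both empty.

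First I would record the geometry. For $w(e)=\ell$, each isosceles triangle $T_i$ has height $\tfrac{\ell}{2}\tan\alpha$, hence area $\tfrac{\ell^2}{4}\tan\alpha$; the two of them overlap only along $e$ and together form a rhombus $T_1\cup T_2$ with diagonals $\ell$ and $\ell\tan\alpha$, of area $\tfrac{\ell^2}{2}\tan\alpha$. For a point set drawn uniformly from a domain $D$ of area $A$, a fixed region of area $a$ lying inside $D$ contains none of the $n-2$ remaining points with probability $(1-a/A)^{n-2}$. Hence, as long as $T_1\cup T_2\subseteq D$ — which holds for the short edges that dominate the sum — writing $\beta:=\tan\alpha$,
\[
  P(e)\;=\;2\Bigl(1-\frac{\beta\ell^2}{4A}\Bigr)^{n-2}-\Bigl(1-\frac{\beta\ell^2}{2A}\Bigr)^{n-2}.
\]

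Next I would sum over pairs by fixing one endpoint at a point $s$ and integrating over the position of the other endpoint in polar coordinates $(\ell,\phi)$ about $s$ (area density $1/A$, Jacobian $\ell$). The angular integral contributes a factor $2\pi$, and in the radial integral the substitution $u=\beta\ell^2/(4A)$ (respectively $u=\beta\ell^2/(2A)$) turns each piece into a constant multiple of $\int_0^1(1-u)^{n-2}\,du=\tfrac1{n-1}$. Carrying the constants through and multiplying by $\binom n2=\tfrac{n(n-1)}2$, the factors of $A$ cancel and one factor $n-1$ cancels, leaving a contribution of $4\pi n/\tan\alpha$ from the first term of $P(e)$ and $\pi n/\tan\alpha$ from the subtracted term, so that
\[
  \sum_{\{s,t\}}P(e)\;=\;\frac{4\pi n}{\tan\alpha}-\frac{\pi n}{\tan\alpha}\;=\;\frac{3\pi n}{\tan\alpha}.
\]
The strict inequality then follows from the slack in the estimates used along the way — notably $(1-x)^m<e^{-mx}$, and the fact that $T_1\cup T_2$ genuinely overruns $D$ for some edges.

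The main obstacle is making this last step rigorous in a bounded domain: when $T_1$, $T_2$, or the rhombus is clipped by $\partial D$, its effective area is smaller, so the ``empty'' probability is larger and the naive area formulas err in the wrong direction. One must therefore argue that the boundary contribution is of lower order — the edges that survive the diamond test have length $\Theta(\sqrt{A/n})\to 0$, so only a vanishing fraction of pairs are affected — or else run the whole computation on the flat torus, where the area formulas hold exactly and the substitution above is clean. Two smaller checks are that $P(e)$ as written is nonnegative (so the three inclusion--exclusion terms may be bounded separately) and that the radial integrals may be extended to infinity (legitimate, since the integrands vanish once the region would exceed the available area); the hypothesis $\alpha\le\pi/3$ is a mild technical condition that keeps $\tan\alpha$ bounded and covers the base angles $\pi/8$, $\pi/4.6$, $\pi/4$ relevant in practice. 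Finally, the improvement over Beirouti and Snoeyink's $3\pi n/\sin\alpha$ is exactly that the sharp triangle area $\tfrac{\ell^2}{4}\tan\alpha$ is used throughout, in place of a weaker estimate of the form $c\,\ell^2\sin\alpha$.
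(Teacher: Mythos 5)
Your computation is correct and arrives at the right constant, but it takes a genuinely different route from the paper. You fix both endpoints, write the survival probability of the pair via inclusion--exclusion in terms of the triangle and rhombus areas relative to the domain area $A$, and then integrate over the position of the second endpoint in polar coordinates, with the substitution $u=\ell^2\tan\alpha/(4A)$ collapsing each radial integral to a multiple of $1/(n-1)$. The paper instead fixes one endpoint $s$ and conditions on the \emph{rank} of the other endpoint in the ordering of the remaining points by distance from $s$: given that $t_i$ is the $i$-th nearest point at distance $r$, the $i$ closer points are uniform on the disk $B_r(s)$, and because $\alpha\le\pi/3$ guarantees each isosceles triangle is contained in $B_r(s)$, the probability that a given closer point lands in a triangle is the scale-free constant $p=\tan(\alpha)/(4\pi)$. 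The expected number of surviving edges out of $s$ is then $\sum_{i\ge 0}\bigl(2(1-p)^i-(1-2p)^i\bigr)=3/(2p)=6\pi/\tan(\alpha)$, a geometric series with no integration and no reference to the domain area; the strict inequality comes cleanly from extending the finite sum over $0\le i\le n-2$ to an infinite one. This also shows that the hypothesis $\alpha\le\pi/3$ is not merely ``a mild technical condition that keeps $\tan\alpha$ bounded'': it is exactly what makes the triangle's apex, at distance $r/(2\cos\alpha)\le r$ from $s$, stay inside $B_r(s)$ so that the ratio argument applies. Finally, both arguments idealize away the boundary (the paper's conditional uniformity on $B_r(s)$ fails near the boundary of the domain just as your area formulas do), but the direction of the error is more troublesome in your setup, since clipping only increases the survival probabilities and thus works against the claimed upper bound; you flag this honestly, and the torus version you suggest would indeed make your derivation exact, whereas the paper's slack is simply the discarded tail of the geometric series.
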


\begin{proof}
Fix an arbitrary point $s \in S$ and consider the remaining points $t_i$ , $0 \le i \le n-2$ in order of increasing distance to $s$. Edge $e_i := st_i $ fulfills the diamond property if at least one of the two corresponding isosceles triangles is empty, i.e., it contains non of the $i$ points  $t_0, \dots ,t_{i-1}$. 

For any given distance $r$, each triangle has area $A = 1/4 \tan(\alpha) r^2$. The points are uniformly distributed in the circle centered at $s$ with radius $r$, thus the probability $p$ that a fixed point lies in a fixed triangle is $p = A/\pi r^2 = \tan(\alpha)/4\pi$. Each triangle is empty with probability $(1-p)^i$. The whole diamond is empty with probability $(1-2p)^i $. It follows that at least one of the triangles is empty with probability $2(1-p)^i - (1-2p)^i$. 

Let $X_i$ be the indicator variable of the event that edge $e_i$ fulfills the diamond property. Then $X = \sum_{0}^{n-2} X_i$ is the number of outgoing edges that pass the diamond test.
By linearity of expectation and the geometric series, the expected value of $X$ is bounded by

\begin{equation*}
E[X] = \sum_{i=0}^{n-2} E[X_i] < \sum_{i=0}^{\infty} 2(1-p)^i - (1-2p)^i = \frac{2}{p} - \frac{1}{2p} = \frac{3}{2p} = \frac{6\pi}{\tan(\alpha)}
\end{equation*}

If we apply the same argument to each point in $S$, we are counting each edge twice. Hence the number of edges that pass the diamond test with base angle $a$ is less than $3\pi n / \tan(\alpha)$. 
\end{proof}

For $\alpha = \pi/4.6$ we get a value less than $11.5847$, which is very close to the values observed and achieved by our implementation; see Table~\ref{tbl::diamond_uniform_1} in Section \ref{sec::experiments}. In contrast, the value achieved by the implementation of Beirouti and Snoeyink is $\approx 14.3$ \cite{snoeyink98implementations}. 

\subsection{Dead Sectors and Bucketing}

Our bucketing scheme is based on the same idea of dead sectors as described by Snoeyink and Beirouti \cite{snoeyink98implementations}. Our implementation differs in two points. Despite being simpler, it has higher accuracy and it can easily be integrated into common spatial data structures; such as quadtrees, kd-trees and R-trees. Therefore, it is not limited to uniformly distributed point sets.

\begin{figure}
\centering
\includegraphics[width=0.40\columnwidth]{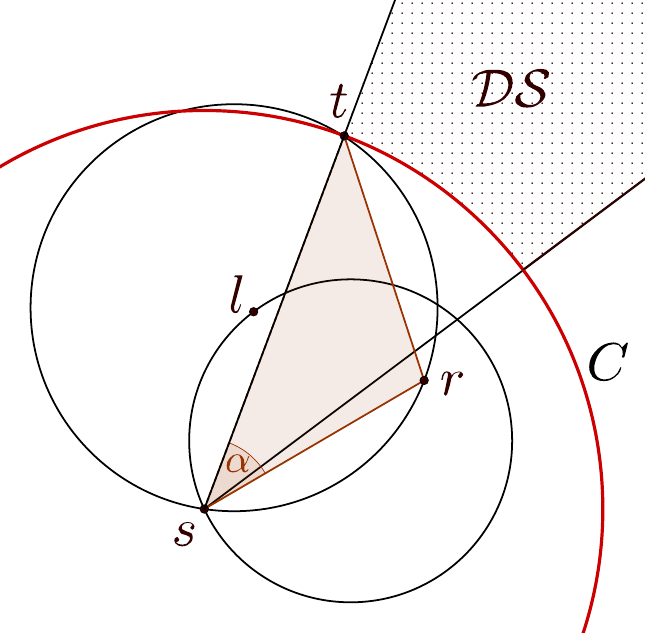}
\caption[The simplified dead sector $\mathcal{DS}$]{Simplified dead sector $\mathcal{DS}$ is bounded by two rays and circle $C$.
$\mathcal{DS}$ is stored as a triple: an interval of two polar angles and a squared radius.}
\label{fig::simple_dead_sector}
\end{figure}

In order to avoid storing complicated sector boundaries, we simplify the shape. Instead of bounding a sector $\mathcal{DS}$ by two circles as illustrated in Figure~\ref{fig::dead_sector}, we only use a single big circle $C$ with center $s$ at the expense of losing a small part of $\mathcal{DS}$. This allows a compact representation of dead sectors as a triple of three numbers: an interval consisting of two polar angles and a squared radius; see Figure~\ref{fig::simple_dead_sector}.

The main ingredient for our bucketing scheme is a spatial search tree with support for incremental nearest neighbor searches, such as quadtrees, kd-trees or R-trees. A spatial search tree hierarchically subdivides the point set into progressively finer bounding boxes/rectangles until a predefined threshold is met. Incremental nearest neighbor search queries allow to traverse all nearest neighbors of a point in order of increasing distance. Such queries can easily be implemented by utilizing a priority queue that stores all tree nodes encountered during tree traversal together with the distances to their resp.~bounding box (see Hjaltason and Samet \cite{hjaltason1995}). 

Pruning tree nodes whose bounding box lie in dead sectors is rather simple as follows:
consider a nearest neighbor query for point $s$: when we are about to push a new node $n$ into the priority queue, we compute the smallest polar angle interval $I$ that encloses the bounding box of $n$ and discard $n$ if $I$ is contained in the dead sectors computed so far.
The interval of a bounding box is induced by the two extreme corners as seen from $s$, i.e., the leftmost and the rightmost corner. 

Because nearest neighbors and tree nodes are processed in order of increasing distance, we can store sectors in two stages. On creation, they are inserted into a FIFO-queue; later only the interval component is inserted in a search filter used by the tree. The queue can be seen as a set of pending dead sectors with an attached activation distance $\delta$. As soon as we process a point $t$ with $\dist(s,t) > \delta$ we can insert the corresponding interval into our filter. 

This reduces the data structure used for the filter to a simple set of sorted non-overlapping intervals consisting of polar angles. Overlapping intervals are merged on insertion, which reduces the maximal number of intervals that need to be tracked at the same time to a very small constant \footnote{The exact value is 15 in our case, but it depends on an additional parameter and implementation details.}.

This leaves the issue of deciding which points are used to construct dead sectors. We store all points encountered during an incremental search query in an ordered set $N$ sorted by their polar angle with respect to $s$. Every time we find a new point $t$, it is inserted into $N$ and dead sectors are computed with the predecessor and the successor of $t$ in $N$. There is no need to construct sectors with more than the direct predecessor and successor, because sectors between all adjacent pairs of points in $N$ were already constructed on earlier insertions. Computing the activation distance for new sectors only requires a single multiplication of the current squared distance to $t$ with a precomputed constant. Additionally, the diamond property of edge $st$ is tested against a subset of $N$. 

If we apply the above procedure to every single point, we generate each edge twice, once on each of the two endpoints. Therefore, we output only those edges $e=st$ such that $s < t$, i.e., $s$ is lexicographically smaller than $t$. As a consequence, we can exclude the left half-space right from the beginning by inserting an initial dead sector $\mathcal{DS}_0 = (1/2 \pi, 3/2 \pi \rbrack$ at distance 0.

In order to increase cache efficiency we store the point set in a spatially sorted array. The points are ordered along a Hilbert curve, but the choice of a particular space-filling curve is rather arbitrary.  Our spatial tree implementation is a quadtree that is built on top of that array during the sorting step.
Profiling suggests the memory layout of the tree nodes is not important. We apply the diamond test to every single point and we can freely choose the order in which we process them. The points are spatially sorted and processed in this order, which leads to similar consecutive search paths in the tree and therefore most nodes are already in the CPU cache. 

In order to avoid the expensive transcendental \texttt{atan2} function for polar angle computations, we can use any function that is monotonic in the polar angle for comparisons between different angles. One such function, termed \emph{pseudo-angle}, was described by Moret and Shapiro \cite{moret1991PtoNP}. The basic idea is to measure arc lengths on the $L_1$ unit circle, instead of the $L_2$ unit circle. With some additional transformations, the function can be rewritten to $\sign(y)(1 - x /(|x| + |y|))$, where we define $\sign(0) =: 1$. This function has the same general structure as \texttt{atan2}: a monotonic increase in the intervals $[0, \pi]$, $(\pi, 2\pi)$ and a discontinuity at $\pi$, with a jump from positive to negative values. Additionally, it gives rise to a one-line implementation (see Figure~\ref{code::pseudo_angle} in Appendix \ref{app::pseudo_angles}), which gets compiled to branch-free code.

\subsection{LMT-Skeleton}
\label{ssec::lmt_skeleton}

For ``nicely'' distributed point sets, a limiting factor of the heuristic is the space required to store the half-edge data structure in memory.
In order to save some space we removed three variables from the original description (\emph{rightPoly, leftPoly, polyWeight}). They serve no purpose until after the heuristic, when they are used for the polygon triangulation step  (therefore, reducing cache-efficiency and wasting space). For edges marked \emph{impossible} (typically the majority), they are never used at all; for the remaining edges they can be stored separately as soon as needed. 
We further reduce storage overhead by storing all edges in a single array sorted by source vertex (also known as a \emph{compressed sparse row graph}). All outgoing edges of a single vertex are still radially sorted.  In addition to the statuses \emph{possible}, \emph{certain}, \emph{impossible}, we store whether an edge lies on the convex hull.

As mentioned in Section \ref{sec::previous_tools}, certificates are found by utilizing \Adv in fashion of a nested loop.
It is crucial to define one half-edge of each pair as the primary one to distinguish which half-edge corresponds to the outer resp.~inner ``loop''. The choice is arbitrary as long as it is consistent throughout the execution of the algorithm.

Another problem that went unnoticed emerges when the diamond test and LMT-skeleton are combined. In this case \Adv does not guarantee to find empty triangles; it may stop with non-empty triangles due to missing incident edges. An example is shown in Figure~\ref{fig::advance_fail}, where all edges with the exception of $f$ pass the diamond test; calling \Adv on $e$ yields a non-empty triangle.

\begin{figure}[h]
\centering
\includegraphics[width=0.6\columnwidth]{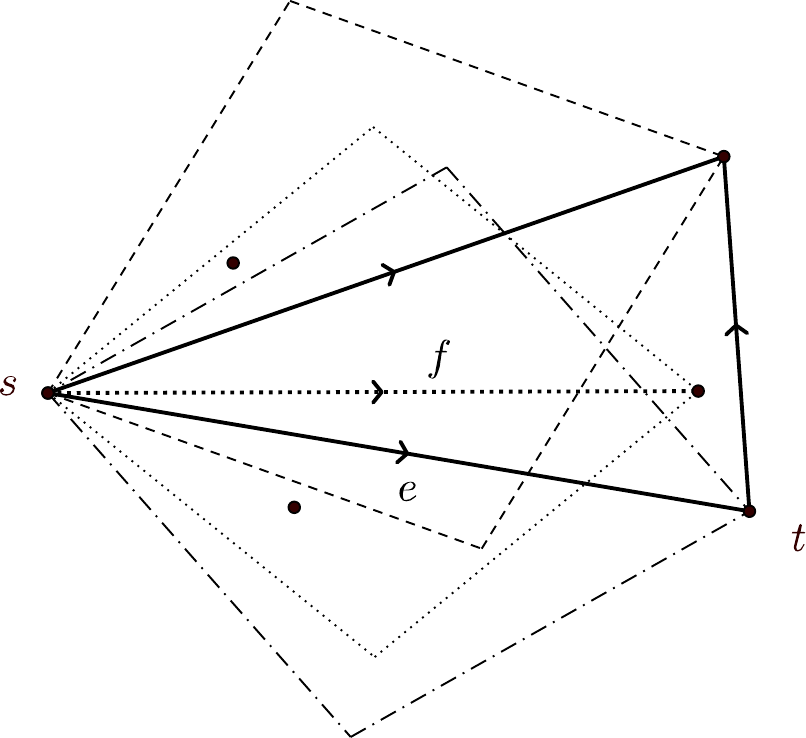}
\caption[Pitfall of the \texttt{Advance} function.]{Edge $f$ does not have the diamond property and in turn the \Adv function fails: it stops with a non-empty triangle.}
\label{fig::advance_fail}
\end{figure}

Fortunately, the side effect of wrong certificates is rather harmless. In the worst-case an otherwise impossible edge stays possible, which in turn may prevent other edges from being marked \emph{certain}, however, no edge will incorrectly be marked \emph{certain}. Even though invalid certificates occur frequently, we observed them to be of transient nature because some certificate edge itself becomes impossible later in the heuristic.
Therefore, we still use \Adv in our implementation to find certificates. However, it is important to keep in mind that the function can fail. Beirouti \cite{beirouti1997thesis} states that they also use \Adv to scan for empty triangles in simple polygons during the dynamic programming step after the LMT-skeleton. Even then \Adv can fail by returning triangles that are part of two adjacent simple polygonal faces; see Figure~\ref{fig::advance_fail_polygon} for an example.

\begin{figure}
\centering
\includegraphics[width=0.6\columnwidth]{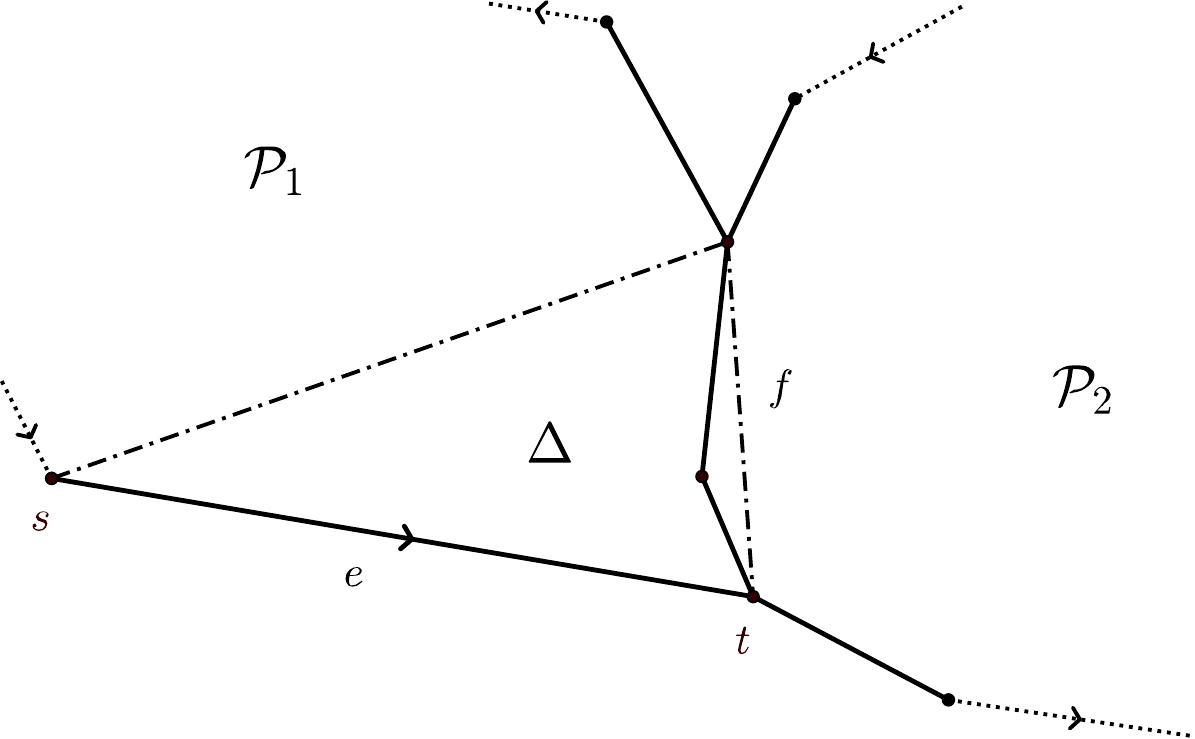}
\caption{\texttt{Advance} can return non-empty triangle $\Delta$ which is part of two adjacent polygonal faces.}
\label{fig::advance_fail_polygon}
\end{figure}

\begin{algorithm}
 \SetKwInOut{Input}{Input}
 \SetKwInOut{Output}{Output}
 \SetKwFunction{EmptyEdges}{EmptyEdges}
 \SetKwFunction{Restack}{RestackEdges}
 \SetKwFunction{Pop}{Pop}
 \SetKwFunction{LmtLoop}{LMT-Loop}
 \SetKwFunction{CH}{CH}
 \SetKwFunction{Intersect}{HasIntersections}
 
 \SetKwProg{function}{Function}{}{}

 
 \BlankLine
 \Begin{
	 Init data structures \;
	 $ST \gets$ \EmptyEdges{$S$}$\setminus$ \CH($S$) \tcc*{Stack $ST$}
	 
	 \LmtLoop($ST$) \;

	 \ForEach{\textsf{\upshape Possible primary half-edge} $e$}{
		\If{$\lnot$\Intersect($e$)}{
			Mark $e$ as certain\;
			}
	 } 
 }
 
 \function{\LmtLoop{$ST$}}{
	 \While{$ST$ \textsf{\upshape is not empty}}{
	 	$e \gets$ \Pop($ST$) \;
	 	\If{$e$ \textsf{\upshape has no certificate}}{
		 	\Restack(e) \tcc*{Push edges with $e$ in their certificate.}
		 	Mark $e$ as impossible\;
	 	}
	 } 	
 }
 
\caption{Refactored LMT-skeleton algorithm.}
\label{alg::lmt_skeleton}
\end{algorithm}

Pseudocode for our implementation is given in Algorithm~\ref{alg::lmt_skeleton}. In essence it is still the same as given by Beirouti and Snoeyink \cite{snoeyink98implementations}, however, with some optimizations applied.
First, the convex hull edges are implicitly given during initialization of the $j\_start$-pointers and can be marked as such without any additional cost.
Determining the convex hull edges beforehand allows to remove the case distinction inside the \LmtLoop, i.e., it removes all intersection tests that are applied to impossible edges.
Secondly, sorting the stack by edge length destroys spatial ordering and the loss of locality of reference outweighs all gains on modern hardware. 
Without sorting, it is actually not necessary to push all edges onto the stack upfront. 
Lastly, with proper partitioning of the edges, the \LmtLoop can be executed in parallel -- described in more detail in Section \ref{ssec::parallelization}.

Additionally, we incorporated an improvement to the LMT-skeleton suggested by Aichholzer et al.~\cite{aichholzer99}.
Consider a certificate for an edge $e$, i.e., a quadrilateral $q_e$ such that $e$ is locally minimal w.r.t.~$q_e$. It is only required that the four certificate edges $f_i \in q_e$ are not \emph{impossible}, that is, edge $f_i$ is either on the convex hull or in turn has some certificate $q_i$. Notice that $q_i$ and $q_e$ may not share a common triangle. However, if for edge $f_i$ there is no such certificate $q_i$ that shares a triangle with $q_e$, then edge $e$ cannot be in any locally minimal triangulation and $e$ can be marked \emph{impossible}.

The improved LMT-skeleton is computationally much more expensive.
Consider the case in which edge $e=(s,t)$ becomes impossible. In order to find invalid certificates, it is no longer sufficient to scan only those edges incident to either $s$ or $t$. In addition to edges of the form $(s,u)$, resp.~$(t,u)$, we also have to check all edges incident to any adjacent vertex $u$ for invalid certificates. Because edges do not store the certificates for their certificate it gets even worse: we cannot know if an edge has to be restacked and we must restack and recheck all of them. 
Another consequence is that we cannot resume the traversal of triangles for any edge $f_i$, because we do not know where we stopped the last time.

We are left with a classic space-time trade-off and we chose not to store any additional data. Instead we apply the improved LMT-heuristic only to edges surviving an initial round of the normal LMT-heuristic.

\subsection{Parallelization}
\label{ssec::parallelization}

Because the LMT-heuristic performs only  local changes, most of the edges can be processed in parallel without synchronization. Problems occur only if adjacent edges are processed concurrently (for the improved LMT-skeleton this is unfortunately not true, because marking an edge \emph{impossible} affects a larger neighborhood of edges). In order to parallelize the normal LMT-heuristic, we implemented a solution based on data partitioning without any explicit locking.

We cut the vertices $V$ into two disjoint sets $V= V_1 \cup V_2$ and process only those edges with both endpoints in $V_1$ (resp.~$V_2$) in parallel. Define $C$ as the cut set ${\{\{s,t\}\in E\mid s\in V_1,t\in V_2\}}$, i.e., all edges with one endpoint in $V_1$ and the other in $V_2$. 
While edges in $E(V_1)$ resp.~$E(V_2)$ are processed in parallel by two threads, edges in $C$ are accessed read-only by both threads and are handled after both threads join. This way we never process two edges with a common endpoint in parallel.

This leaves the question of how to partition the vertices into two disjoint sets. Recall that all vertices are stored in contiguous memory and are sorted in Hilbert order. A split in the middle of the array partitions the points into two sets that are separated by a rather simple curve. Therefore, the cut set is likely to be small. Our half-edge array is sorted by source vertex, i.e., getting all edges with a specific source vertex in either half of the partition is trivial. Deciding if an edge $e=(s,t)$ is in the cut set consists of two comparisons of pointer $t$ against the lower and upper bound of the vertex subset.
Furthermore, with the fair assumption that the average degree of vertices is the same in both partitions, we obtain perfectly balanced partitions w.r.t.~the number of edges.

In order to avoid a serial scan at the top, we push the actual work of computing $C$ down to the leaves in the recursion tree. Scanning of the half-edge array starts at the leave nodes: processing of half-edges that belong to some cut set is postponed, instead they are passed back to the parent node. The parent in turn scans the edges it got from its two children, processes all edges it can and passes up the remaining ones. In other words, the final cut set $C$ bubbles up in the tree, while all intermediate cuts are never explicitly computed. The edges passed up from a node typically contain half-edges of several higher-level cuts.
This way, partitioning on each level of the recursion tree only takes constant time, while the actual work is fully parallelized at the leaf level.

Experiments and observations indicate that on large, uniformly distributed point sets approximately 0.15\% of all edges make it back to the root node, i.e., the amount of serial processing is low and the approach scales well. 
On degenerate instances it can perform poorly; e.g. if all points lie on a circle, then half of the edges will be returned to the root. For such cases, the code could be extended to repartition the remaining edges with another set of cuts.

After the LMT-heuristic completes, we are left with many polygonal faces that still need to be triangulated. Our implementation traverses the graph formed by the edges with one producer thread in order to collect all faces and multiple consumer threads to triangulate them with dynamic programming.

\section{Computational Results}
\label{sec::experiments}

Computations were performed on a machine with an Intel i7-6700K quad-core and 64GB memory. The code was written in C++ and compiled with gcc 5.4.0. 

We utilized CGAL \cite{cgal} for its exact orientation predicates, however, parts of the code are still prone to numerical errors. For example, triangulating the remaining polygonal faces requires to compute and compare the sum of radicals, which we implemented with double-precision arithmetic.
For small instances, it was possible to compare the results of our implementation against an independent implementation based on an integer programming formulation of the MWT problem. However, straightforward integer programming becomes infeasible quite fast and comparisons for point sets with thousands of points were not possible. 

\subsection{Uniformly and Normally Distributed Point Sets}

\begin{table}
\includegraphics[width=\columnwidth]{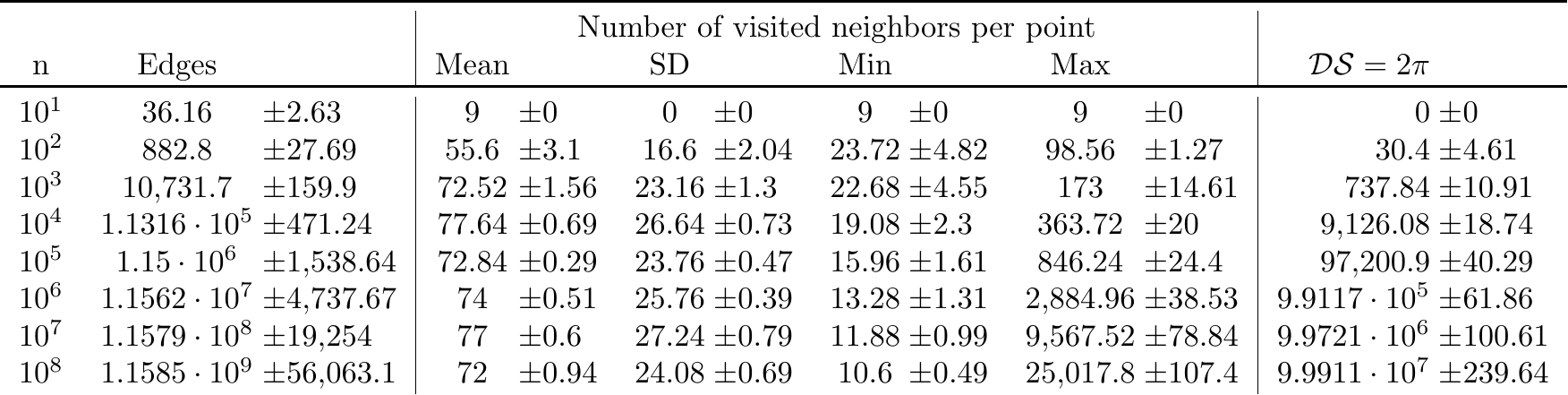}
\caption[Diamond test statistics]{Diamond test implementation on uniformly distributed point sets. The table shows the mean and the standard deviation of 25 different instances. The extreme values are assumed by points at the point set boundary.}
\label{tbl::diamond_uniform_1}
\end{table}

Table \ref{tbl::diamond_uniform_1} shows results of our diamond test implementation on uniformly distributed point sets with sizes ranging from $10$ to $10^8$ points. The table shows the mean values and the standard deviation of 25 different instances. Each instance was generated by choosing $n$ points uniformly from a square centered at the origin. Point coordinates were double-precision values. The diamond test performs one incremental nearest neighbor query for each point in order to generate the edges that pass the test. The last column shows the number of queries that were aborted early because dead sector covered the whole search space.

\begin{table}
\includegraphics[width=\columnwidth]{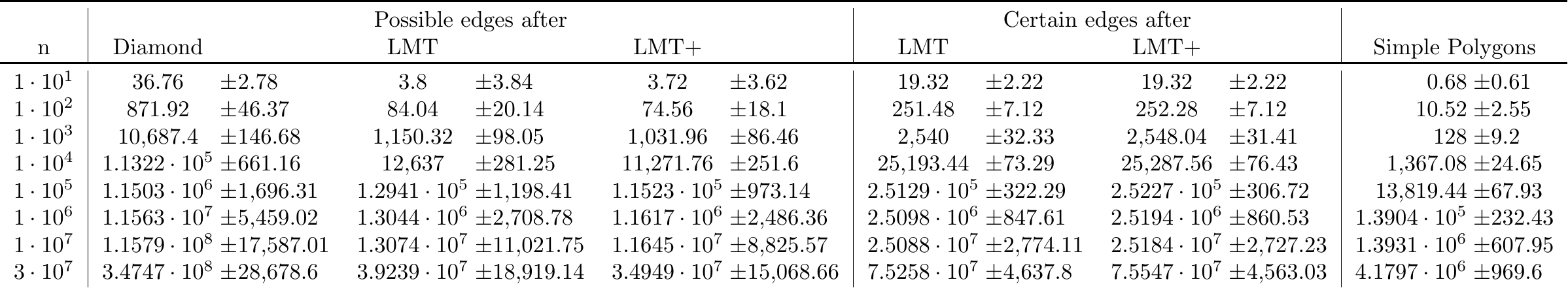}
\caption{LMT-skeleton statistics on uniformly distributed point sets.}
\label{tbl::lmt_uniform}
\end{table}

Table \ref{tbl::lmt_uniform} shows statistics for the LMT-heuristic on uniformly distributed point sets. The instance sizes range from 10 points up to 30,000,000 points. For each size 25 different instances were generated.
For the largest instances, the array storing the half-edges consumes nearly 39 GB of memory on its own. The serial initialization of the half-edge data structure, which basically amounts to radially sorting edges, takes longer than the parallel \LmtLoop on uniformly and normally distributed points.
The improved LMT-skeleton by Aichholzer et al. is denoted LMT+ in the tables. The resulting skeleton was almost always connected in the computations and the number of remaining simple polygons that needed to be triangulated is shown in the last column. Only one instance of size $3 \cdot 10^7$ contained one non-simple polygon.

As we can see, the LMT-skeleton eliminates most of the possible edges with only $\approx 11\% $ remaining. Given that any triangulation has $3n - |\CH| - 3$ edges, the certain edges amount to $\approx 83\%$ of the complete triangulation. The improved LMT-skeleton reduces the amount of possible edges by another 10\%, but it provides hardly any additional certain edges.

\begin{figure}
\includegraphics[width=\columnwidth]{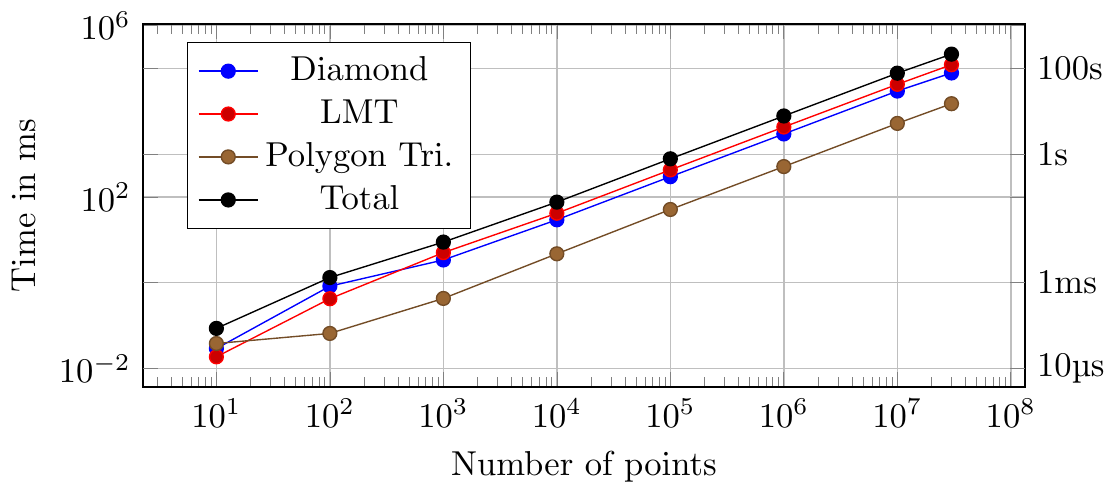}
\caption{LMT-skeleton runtime on uniformly distributed point sets.}
\label{fig::lmt_runtime}
\end{figure}

The results on normally distributed point sets are basically identical. Point coordinates were generated by two normally distributed random variables $ X,Y \sim \mathcal{N}(\mu,\,\sigma^{2})$, with mean $\mu = 0$ and standard deviation $\sigma \in \{1,100,100000\}$. The tables are given in Appendix \ref{app::gaussian}.

\subsection{TSPLIB + VLSI}
\label{ssec::tsplib}

In addition to uniformly and normally distributed instances, we ran our implementation on instances found in the well-known TSPLIB \cite{reinelt1991}, which contains a wide variety of instances with different distributions. The instances are drawn from industrial applications and from geographic problems.
All 94 instances have a connected LMT-skeleton and can be fully triangulated with dynamic programming to obtain the minimum weight triangulation. The total time it took to solve all instances of the TSPLIB was approximately 8.5 seconds. A complete breakdown for each instance is given in Appendix \ref{app::tsplib} Table~\ref{tbl::tsplib}.

Additional point sets can be found at \texttt{\url{http://www.math.uwaterloo.ca/tsp/vlsi/}}. This collection of 102 TSP instances was provided by Andre Rohe, based on VLSI data sets studied at the Forschungsinstitut für Diskrete Mathematik, Universität Bonn. 
The LMT-heuristic is sufficient to solve all instances, except \texttt{lra498378}, which contained two non-simple polygonal faces.
A complete breakdown is given in Appendix \ref{app::vlsi}. 
Our implementation of the improved LMT-skeleton performs exceedingly bad on some of these instances; see Table~\ref{tbl::vlsi}.  
These instances contain empty regions with many points on the ``boundary''. Such regions are the worst-case for the heuristics because most edges inside them have the diamond property, which in turn leads to vertices with very high degree. Whenever an edge is found to be impossible by the improved LMT-skeleton, almost all edges are restacked and rechecked. 
Given the overall results of the improved LMT-skeleton, storing additional data to increase performance and/or limiting it to non-simple polygons may be reasonable.

\bgroup \small \setlength {\tabcolsep }{4pt}\begin {longtabu}{c|r|r|r|r|r|r}%
\caption {VLSI statistics}\label{tbl::vlsi}\\ & \multicolumn {6}{c}{Time in ms}\\Instance&Total&DT&LMT-Init&LMT-Loop&LMT+&Dyn. Prog.\\
\hline
ara238025&\pgfutilensuremath {15{,}325}&\pgfutilensuremath {4{,}954}&\pgfutilensuremath {446}&\pgfutilensuremath {496}&\pgfutilensuremath {9{,}279}&\pgfutilensuremath {148}\\%
lra498378&\pgfutilensuremath {382{,}932}&\pgfutilensuremath {44{,}267}&\pgfutilensuremath {1{,}238}&\pgfutilensuremath {7{,}532}&\pgfutilensuremath {329{,}292}&\pgfutilensuremath {599}\\%
lrb744710&\pgfutilensuremath {484{,}430}&\pgfutilensuremath {7{,}952}&\pgfutilensuremath {1{,}377}&\pgfutilensuremath {2{,}661}&\pgfutilensuremath {471{,}564}&\pgfutilensuremath {872}\\%
sra104815&\pgfutilensuremath {1{,}937}&\pgfutilensuremath {559}&\pgfutilensuremath {191}&\pgfutilensuremath {198}&\pgfutilensuremath {922}&\pgfutilensuremath {65}\\%
\end {longtabu}\egroup %

\section{Conclusion}
We have shown that despite of the theoretical hardness of the MWT problem, a wide range of large-scale instances can be solved to optimality.

Difficulties for other instances arise from two sources. On one hand, we have instances containing  more or less regular $k$-gons with one or more points near the center. These configurations can lead to a highly disconnected LMT-skeleton (an example is given by Belleville et al.~\cite{belleville96}) and require exponential time algorithms to complete the MWT. Preliminary experiments suggest that such configurations are best solved with integer programming. The example point set given by Belleville et al.~\cite{belleville96} can easily be solved with CPLEX in less than a minute, while the dynamic programming implementation of Grantson et al.~\cite{grantsonBL08} was not able to solve it within several hours.
On the other hand, we have instances containing empty regions with many points on their ``boundary'', such as empty $k$-gons and circles. They may be solvable in polynomial time, but trigger the worst-case behavior of the heuristics.  
Deciding what is the best approach to handle these two types of difficulties and integrating it into our implementation is left for future work.  

\subparagraph*{Acknowledgements.}

I want to thank Sándor Fekete and Victor Alvarez for useful discussions and suggestions that helped to improve the presentation of this paper.

\newpage
\bibliography{bibliography}
\newpage
\appendix
\section{Normally distributed point sets}
\label{app::gaussian}

Point coordinates were independently generated by two normal distributed random variables $ X,Y \sim \mathcal{N}(\mu,\,\sigma^{2})$, with mean $\mu = 0$ and standard deviation $\sigma \in \{1,100,100000\}$.

\begin{table}[!h]
	\includegraphics[width=\columnwidth]{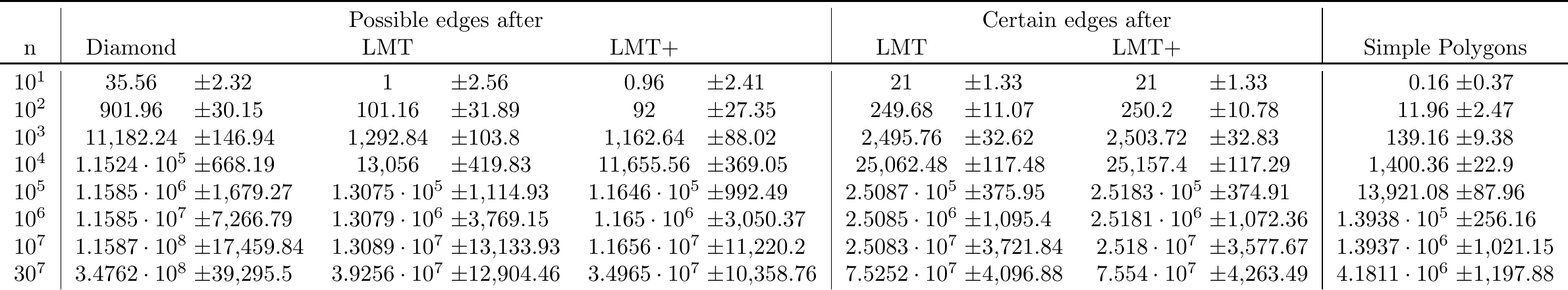}
	\caption{$X,Y \sim \mathcal{N}(0,\,1)$}
\end{table}

\begin{table}[!h]
	\includegraphics[width=\columnwidth]{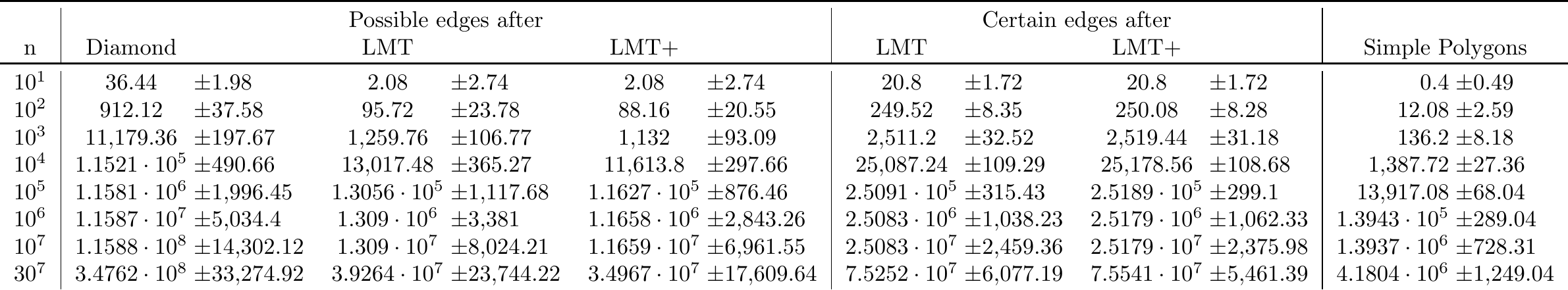}
	\caption{$X,Y \sim \mathcal{N}(0,\,1000)$}
\end{table}

\begin{table}[!h]
	\includegraphics[width=\columnwidth]{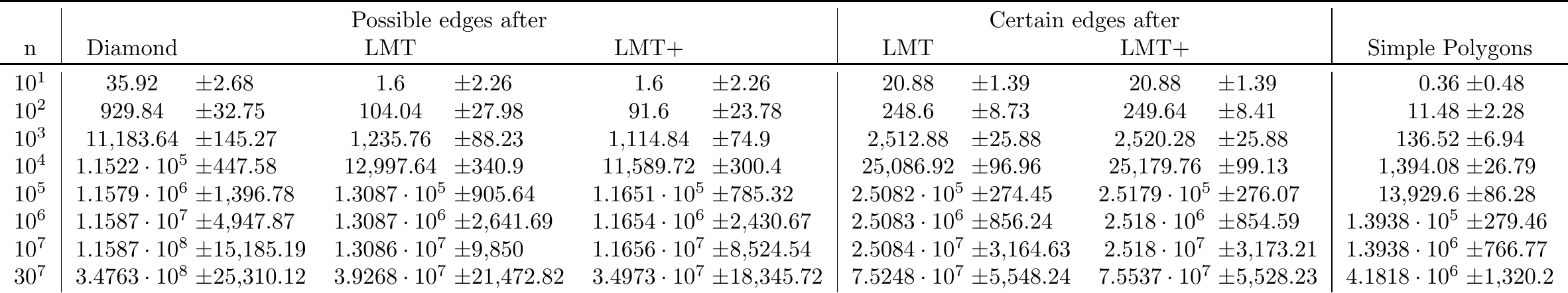}
	\caption{$X,Y \sim \mathcal{N}(0,\,1000000)$}
\end{table}

\newpage
\section{TSPLIB Results}
\label{app::tsplib}

The following table shows the results on instances of the TSPLIB. All instances have a connected LMT-skeleton and can be completed to the MWT with dynamic programming. The weight ratio of the Delaunay triangulation and the  MWT is given in the second column.
The time includes the time that was needed for the subsequent polygon triangulation with dynamic programming. 
Examples of two instances are given in Appendix \ref{app::tsplib_examples}.

\begingroup \small %
\bgroup \small \setlength {\tabcolsep }{4pt}\begin {longtabu}{c|r|rrRR|RR|r@{}l}%
\caption {TSPLIB statistics}\label {tbl::tsplib}\\ \toprule & \multicolumn {1}{c|}{Weight} &\multicolumn {4}{c|}{Possible edges after} & \multicolumn {2}{c|}{Certain edges}\\Instance&Ratio&DT&Factor&LMT&LMT+&LMT&LMT+&\multicolumn {2}{c}{Time in ms}\\\toprule \endhead \midrule \multicolumn {10}{r}{{Continued on next page}} \\ \bottomrule \endfoot \midrule \multicolumn {10}{r}{{End}} \\ \bottomrule \endlastfoot %
a280&\pgfutilensuremath {1.014}&\pgfutilensuremath {2{,}444}&\pgfutilensuremath {8.76}&\pgfutilensuremath {414}&\pgfutilensuremath {379}&\pgfutilensuremath {642}&\pgfutilensuremath {643}&\pgfmathprintnumber [fixed,fixed zerofill,precision=2]{1Y2.61533e0]}& $\mskip \thickmuskip \pm $\pgfmathprintnumber [fixed,fixed zerofill,precision=2]{1Y1.58586e-1]}\\%
ali535&\pgfutilensuremath {1.056}&\pgfutilensuremath {5{,}936}&\pgfutilensuremath {11.731}&\pgfutilensuremath {806}&\pgfutilensuremath {702}&\pgfutilensuremath {1{,}227}&\pgfutilensuremath {1{,}236}&\pgfmathprintnumber [fixed,fixed zerofill,precision=2]{1Y5.94965e0]}& $\mskip \thickmuskip \pm $\pgfmathprintnumber [fixed,fixed zerofill,precision=2]{1Y2.73186e-1]}\\%
att48&\pgfutilensuremath {1.014}&\pgfutilensuremath {305}&\pgfutilensuremath {6.354}&\pgfutilensuremath {11}&\pgfutilensuremath {11}&\pgfutilensuremath {125}&\pgfutilensuremath {125}&\pgfmathprintnumber [fixed,fixed zerofill,precision=2]{1Y4.23277e-1]}& $\mskip \thickmuskip \pm $\pgfmathprintnumber [fixed,fixed zerofill,precision=2]{1Y2.93195e-2]}\\%
att532&\pgfutilensuremath {1.039}&\pgfutilensuremath {5{,}507}&\pgfutilensuremath {10.352}&\pgfutilensuremath {561}&\pgfutilensuremath {491}&\pgfutilensuremath {1{,}370}&\pgfutilensuremath {1{,}375}&\pgfmathprintnumber [fixed,fixed zerofill,precision=2]{1Y4.28769e0]}& $\mskip \thickmuskip \pm $\pgfmathprintnumber [fixed,fixed zerofill,precision=2]{1Y8.19713e-2]}\\%
berlin52&\pgfutilensuremath {1.022}&\pgfutilensuremath {416}&\pgfutilensuremath {8}&\pgfutilensuremath {71}&\pgfutilensuremath {67}&\pgfutilensuremath {116}&\pgfutilensuremath {116}&\pgfmathprintnumber [fixed,fixed zerofill,precision=2]{1Y6.01144e-1]}& $\mskip \thickmuskip \pm $\pgfmathprintnumber [fixed,fixed zerofill,precision=2]{1Y4.2973e-2]}\\%
bier127&\pgfutilensuremath {1.024}&\pgfutilensuremath {1{,}207}&\pgfutilensuremath {9.504}&\pgfutilensuremath {86}&\pgfutilensuremath {82}&\pgfutilensuremath {330}&\pgfutilensuremath {330}&\pgfmathprintnumber [fixed,fixed zerofill,precision=2]{1Y1.494e0]}& $\mskip \thickmuskip \pm $\pgfmathprintnumber [fixed,fixed zerofill,precision=2]{1Y1.11136e-2]}\\%
brd14051&\pgfutilensuremath {1.026}&\pgfutilensuremath {143{,}816}&\pgfutilensuremath {10.235}&\pgfutilensuremath {11{,}800}&\pgfutilensuremath {10{,}573}&\pgfutilensuremath {37{,}562}&\pgfutilensuremath {37{,}633}&\pgfmathprintnumber [fixed,fixed zerofill,precision=2]{1Y1.00435e2]}& $\mskip \thickmuskip \pm $\pgfmathprintnumber [fixed,fixed zerofill,precision=2]{1Y2.75723e0]}\\%
burma14&\pgfutilensuremath {1.002}&\pgfutilensuremath {66}&\pgfutilensuremath {4.714}&\pgfutilensuremath {0}&\pgfutilensuremath {0}&\pgfutilensuremath {34}&\pgfutilensuremath {34}&\pgfmathprintnumber [fixed,fixed zerofill,precision=2]{1Y4.00409e-2]}& $\mskip \thickmuskip \pm $\pgfmathprintnumber [fixed,fixed zerofill,precision=2]{1Y1.66702e-2]}\\%
ch130&\pgfutilensuremath {1.039}&\pgfutilensuremath {1{,}246}&\pgfutilensuremath {9.585}&\pgfutilensuremath {132}&\pgfutilensuremath {117}&\pgfutilensuremath {334}&\pgfutilensuremath {334}&\pgfmathprintnumber [fixed,fixed zerofill,precision=2]{1Y1.49934e0]}& $\mskip \thickmuskip \pm $\pgfmathprintnumber [fixed,fixed zerofill,precision=2]{1Y3.90784e-2]}\\%
ch150&\pgfutilensuremath {1.02}&\pgfutilensuremath {1{,}340}&\pgfutilensuremath {8.933}&\pgfutilensuremath {151}&\pgfutilensuremath {140}&\pgfutilensuremath {367}&\pgfutilensuremath {367}&\pgfmathprintnumber [fixed,fixed zerofill,precision=2]{1Y1.47281e0]}& $\mskip \thickmuskip \pm $\pgfmathprintnumber [fixed,fixed zerofill,precision=2]{1Y4.84561e-2]}\\%
d198&\pgfutilensuremath {1.078}&\pgfutilensuremath {1{,}848}&\pgfutilensuremath {9.333}&\pgfutilensuremath {203}&\pgfutilensuremath {184}&\pgfutilensuremath {484}&\pgfutilensuremath {486}&\pgfmathprintnumber [fixed,fixed zerofill,precision=2]{1Y2.44951e0]}& $\mskip \thickmuskip \pm $\pgfmathprintnumber [fixed,fixed zerofill,precision=2]{1Y1.04141e-1]}\\%
d493&\pgfutilensuremath {1.049}&\pgfutilensuremath {4{,}831}&\pgfutilensuremath {9.799}&\pgfutilensuremath {351}&\pgfutilensuremath {319}&\pgfutilensuremath {1{,}322}&\pgfutilensuremath {1{,}326}&\pgfmathprintnumber [fixed,fixed zerofill,precision=2]{1Y3.58618e0]}& $\mskip \thickmuskip \pm $\pgfmathprintnumber [fixed,fixed zerofill,precision=2]{1Y1.32486e-1]}\\%
d657&\pgfutilensuremath {1.04}&\pgfutilensuremath {7{,}209}&\pgfutilensuremath {10.973}&\pgfutilensuremath {889}&\pgfutilensuremath {784}&\pgfutilensuremath {1{,}644}&\pgfutilensuremath {1{,}649}&\pgfmathprintnumber [fixed,fixed zerofill,precision=2]{1Y6.6063e0]}& $\mskip \thickmuskip \pm $\pgfmathprintnumber [fixed,fixed zerofill,precision=2]{1Y1.75412e0]}\\%
d1291&\pgfutilensuremath {1.015}&\pgfutilensuremath {15{,}704}&\pgfutilensuremath {12.164}&\pgfutilensuremath {2{,}437}&\pgfutilensuremath {2{,}318}&\pgfutilensuremath {2{,}821}&\pgfutilensuremath {2{,}825}&\pgfmathprintnumber [fixed,fixed zerofill,precision=2]{1Y1.47595e1]}& $\mskip \thickmuskip \pm $\pgfmathprintnumber [fixed,fixed zerofill,precision=2]{1Y5.81155e-1]}\\%
d1655&\pgfutilensuremath {1.061}&\pgfutilensuremath {18{,}238}&\pgfutilensuremath {11.02}&\pgfutilensuremath {2{,}486}&\pgfutilensuremath {2{,}331}&\pgfutilensuremath {3{,}839}&\pgfutilensuremath {3{,}857}&\pgfmathprintnumber [fixed,fixed zerofill,precision=2]{1Y1.6423e1]}& $\mskip \thickmuskip \pm $\pgfmathprintnumber [fixed,fixed zerofill,precision=2]{1Y7.20009e-1]}\\%
d2103&\pgfutilensuremath {1.138}&\pgfutilensuremath {18{,}575}&\pgfutilensuremath {8.833}&\pgfutilensuremath {3{,}655}&\pgfutilensuremath {3{,}626}&\pgfutilensuremath {4{,}504}&\pgfutilensuremath {4{,}504}&\pgfmathprintnumber [fixed,fixed zerofill,precision=2]{1Y1.36989e1]}& $\mskip \thickmuskip \pm $\pgfmathprintnumber [fixed,fixed zerofill,precision=2]{1Y2.4949e-1]}\\%
d15112&\pgfutilensuremath {1.022}&\pgfutilensuremath {153{,}329}&\pgfutilensuremath {10.146}&\pgfutilensuremath {12{,}354}&\pgfutilensuremath {11{,}020}&\pgfutilensuremath {40{,}513}&\pgfutilensuremath {40{,}611}&\pgfmathprintnumber [fixed,fixed zerofill,precision=2]{1Y1.02385e2]}& $\mskip \thickmuskip \pm $\pgfmathprintnumber [fixed,fixed zerofill,precision=2]{1Y2.53508e0]}\\%
d18512&\pgfutilensuremath {1.021}&\pgfutilensuremath {183{,}951}&\pgfutilensuremath {9.937}&\pgfutilensuremath {14{,}559}&\pgfutilensuremath {13{,}135}&\pgfutilensuremath {49{,}729}&\pgfutilensuremath {49{,}819}&\pgfmathprintnumber [fixed,fixed zerofill,precision=2]{1Y1.13316e2]}& $\mskip \thickmuskip \pm $\pgfmathprintnumber [fixed,fixed zerofill,precision=2]{1Y1.10664e0]}\\%
dsj1000&\pgfutilensuremath {1.044}&\pgfutilensuremath {12{,}140}&\pgfutilensuremath {12.14}&\pgfutilensuremath {1{,}239}&\pgfutilensuremath {1{,}094}&\pgfutilensuremath {2{,}504}&\pgfutilensuremath {2{,}518}&\pgfmathprintnumber [fixed,fixed zerofill,precision=2]{1Y1.16682e1]}& $\mskip \thickmuskip \pm $\pgfmathprintnumber [fixed,fixed zerofill,precision=2]{1Y6.36674e-1]}\\%
eil51&\pgfutilensuremath {1.004}&\pgfutilensuremath {320}&\pgfutilensuremath {6.275}&\pgfutilensuremath {2}&\pgfutilensuremath {2}&\pgfutilensuremath {139}&\pgfutilensuremath {139}&\pgfmathprintnumber [fixed,fixed zerofill,precision=2]{1Y4.25812e-1]}& $\mskip \thickmuskip \pm $\pgfmathprintnumber [fixed,fixed zerofill,precision=2]{1Y1.47313e-2]}\\%
eil76&\pgfutilensuremath {1.014}&\pgfutilensuremath {509}&\pgfutilensuremath {6.697}&\pgfutilensuremath {20}&\pgfutilensuremath {20}&\pgfutilensuremath {205}&\pgfutilensuremath {205}&\pgfmathprintnumber [fixed,fixed zerofill,precision=2]{1Y7.82104e-1]}& $\mskip \thickmuskip \pm $\pgfmathprintnumber [fixed,fixed zerofill,precision=2]{1Y3.72772e-2]}\\%
eil101&\pgfutilensuremath {1.01}&\pgfutilensuremath {720}&\pgfutilensuremath {7.129}&\pgfutilensuremath {47}&\pgfutilensuremath {40}&\pgfutilensuremath {269}&\pgfutilensuremath {270}&\pgfmathprintnumber [fixed,fixed zerofill,precision=2]{1Y7.5505e-1]}& $\mskip \thickmuskip \pm $\pgfmathprintnumber [fixed,fixed zerofill,precision=2]{1Y5.76622e-2]}\\%
fl417&\pgfutilensuremath {1.123}&\pgfutilensuremath {5{,}313}&\pgfutilensuremath {12.741}&\pgfutilensuremath {331}&\pgfutilensuremath {297}&\pgfutilensuremath {1{,}065}&\pgfutilensuremath {1{,}066}&\pgfmathprintnumber [fixed,fixed zerofill,precision=2]{1Y5.07527e0]}& $\mskip \thickmuskip \pm $\pgfmathprintnumber [fixed,fixed zerofill,precision=2]{1Y1.57974e-1]}\\%
fl1400&\pgfutilensuremath {1.398}&\pgfutilensuremath {16{,}617}&\pgfutilensuremath {11.869}&\pgfutilensuremath {1{,}072}&\pgfutilensuremath {967}&\pgfutilensuremath {3{,}830}&\pgfutilensuremath {3{,}835}&\pgfmathprintnumber [fixed,fixed zerofill,precision=2]{1Y2.23723e1]}& $\mskip \thickmuskip \pm $\pgfmathprintnumber [fixed,fixed zerofill,precision=2]{1Y9.04692e-1]}\\%
fl1577&\pgfutilensuremath {1.137}&\pgfutilensuremath {45{,}012}&\pgfutilensuremath {28.543}&\pgfutilensuremath {2{,}753}&\pgfutilensuremath {2{,}500}&\pgfutilensuremath {3{,}998}&\pgfutilensuremath {4{,}012}&\pgfmathprintnumber [fixed,fixed zerofill,precision=2]{1Y6.20003e1]}& $\mskip \thickmuskip \pm $\pgfmathprintnumber [fixed,fixed zerofill,precision=2]{1Y3.10764e0]}\\%
fl3795&\pgfutilensuremath {1.385}&\pgfutilensuremath {156{,}648}&\pgfutilensuremath {41.277}&\pgfutilensuremath {19{,}655}&\pgfutilensuremath {15{,}297}&\pgfutilensuremath {8{,}368}&\pgfutilensuremath {8{,}404}&\pgfmathprintnumber [fixed,fixed zerofill,precision=2]{1Y1.231e3]}& $\mskip \thickmuskip \pm $\pgfmathprintnumber [fixed,fixed zerofill,precision=2]{1Y1.40369e1]}\\%
fnl4461&\pgfutilensuremath {1.019}&\pgfutilensuremath {42{,}765}&\pgfutilensuremath {9.586}&\pgfutilensuremath {3{,}114}&\pgfutilensuremath {2{,}847}&\pgfutilensuremath {12{,}081}&\pgfutilensuremath {12{,}103}&\pgfmathprintnumber [fixed,fixed zerofill,precision=2]{1Y2.29823e1]}& $\mskip \thickmuskip \pm $\pgfmathprintnumber [fixed,fixed zerofill,precision=2]{1Y2.98271e-1]}\\%
gil262&\pgfutilensuremath {1.035}&\pgfutilensuremath {2{,}650}&\pgfutilensuremath {10.115}&\pgfutilensuremath {220}&\pgfutilensuremath {212}&\pgfutilensuremath {681}&\pgfutilensuremath {681}&\pgfmathprintnumber [fixed,fixed zerofill,precision=2]{1Y2.29076e0]}& $\mskip \thickmuskip \pm $\pgfmathprintnumber [fixed,fixed zerofill,precision=2]{1Y1.64789e-1]}\\%
gr96&\pgfutilensuremath {1.039}&\pgfutilensuremath {963}&\pgfutilensuremath {10.031}&\pgfutilensuremath {140}&\pgfutilensuremath {121}&\pgfutilensuremath {224}&\pgfutilensuremath {225}&\pgfmathprintnumber [fixed,fixed zerofill,precision=2]{1Y1.504e0]}& $\mskip \thickmuskip \pm $\pgfmathprintnumber [fixed,fixed zerofill,precision=2]{1Y3.29832e-2]}\\%
gr137&\pgfutilensuremath {1.077}&\pgfutilensuremath {1{,}304}&\pgfutilensuremath {9.518}&\pgfutilensuremath {161}&\pgfutilensuremath {146}&\pgfutilensuremath {341}&\pgfutilensuremath {341}&\pgfmathprintnumber [fixed,fixed zerofill,precision=2]{1Y1.58575e0]}& $\mskip \thickmuskip \pm $\pgfmathprintnumber [fixed,fixed zerofill,precision=2]{1Y3.27884e-2]}\\%
gr202&\pgfutilensuremath {1.039}&\pgfutilensuremath {1{,}899}&\pgfutilensuremath {9.401}&\pgfutilensuremath {188}&\pgfutilensuremath {172}&\pgfutilensuremath {515}&\pgfutilensuremath {515}&\pgfmathprintnumber [fixed,fixed zerofill,precision=2]{1Y1.76384e0]}& $\mskip \thickmuskip \pm $\pgfmathprintnumber [fixed,fixed zerofill,precision=2]{1Y1.11066e-1]}\\%
gr229&\pgfutilensuremath {1.043}&\pgfutilensuremath {2{,}383}&\pgfutilensuremath {10.406}&\pgfutilensuremath {202}&\pgfutilensuremath {173}&\pgfutilensuremath {599}&\pgfutilensuremath {606}&\pgfmathprintnumber [fixed,fixed zerofill,precision=2]{1Y2.67052e0]}& $\mskip \thickmuskip \pm $\pgfmathprintnumber [fixed,fixed zerofill,precision=2]{1Y2.2706e-1]}\\%
gr431&\pgfutilensuremath {1.057}&\pgfutilensuremath {4{,}466}&\pgfutilensuremath {10.362}&\pgfutilensuremath {405}&\pgfutilensuremath {359}&\pgfutilensuremath {1{,}121}&\pgfutilensuremath {1{,}128}&\pgfmathprintnumber [fixed,fixed zerofill,precision=2]{1Y4.1327e0]}& $\mskip \thickmuskip \pm $\pgfmathprintnumber [fixed,fixed zerofill,precision=2]{1Y1.9644e-1]}\\%
gr666&\pgfutilensuremath {1.047}&\pgfutilensuremath {7{,}502}&\pgfutilensuremath {11.264}&\pgfutilensuremath {902}&\pgfutilensuremath {771}&\pgfutilensuremath {1{,}668}&\pgfutilensuremath {1{,}679}&\pgfmathprintnumber [fixed,fixed zerofill,precision=2]{1Y7.24063e0]}& $\mskip \thickmuskip \pm $\pgfmathprintnumber [fixed,fixed zerofill,precision=2]{1Y2.20699e-1]}\\%
kroA100&\pgfutilensuremath {1.029}&\pgfutilensuremath {923}&\pgfutilensuremath {9.23}&\pgfutilensuremath {51}&\pgfutilensuremath {49}&\pgfutilensuremath {263}&\pgfutilensuremath {263}&\pgfmathprintnumber [fixed,fixed zerofill,precision=2]{1Y1.28677e0]}& $\mskip \thickmuskip \pm $\pgfmathprintnumber [fixed,fixed zerofill,precision=2]{1Y8.37315e-3]}\\%
kroA150&\pgfutilensuremath {1.02}&\pgfutilensuremath {1{,}367}&\pgfutilensuremath {9.113}&\pgfutilensuremath {120}&\pgfutilensuremath {120}&\pgfutilensuremath {377}&\pgfutilensuremath {377}&\pgfmathprintnumber [fixed,fixed zerofill,precision=2]{1Y1.53256e0]}& $\mskip \thickmuskip \pm $\pgfmathprintnumber [fixed,fixed zerofill,precision=2]{1Y8.63528e-2]}\\%
kroA200&\pgfutilensuremath {1.023}&\pgfutilensuremath {1{,}934}&\pgfutilensuremath {9.67}&\pgfutilensuremath {146}&\pgfutilensuremath {130}&\pgfutilensuremath {528}&\pgfutilensuremath {529}&\pgfmathprintnumber [fixed,fixed zerofill,precision=2]{1Y2.08784e0]}& $\mskip \thickmuskip \pm $\pgfmathprintnumber [fixed,fixed zerofill,precision=2]{1Y2.0742e-2]}\\%
kroB100&\pgfutilensuremath {1.025}&\pgfutilensuremath {885}&\pgfutilensuremath {8.85}&\pgfutilensuremath {46}&\pgfutilensuremath {36}&\pgfutilensuremath {263}&\pgfutilensuremath {266}&\pgfmathprintnumber [fixed,fixed zerofill,precision=2]{1Y1.26265e0]}& $\mskip \thickmuskip \pm $\pgfmathprintnumber [fixed,fixed zerofill,precision=2]{1Y3.29427e-2]}\\%
kroB150&\pgfutilensuremath {1.025}&\pgfutilensuremath {1{,}402}&\pgfutilensuremath {9.347}&\pgfutilensuremath {149}&\pgfutilensuremath {134}&\pgfutilensuremath {382}&\pgfutilensuremath {382}&\pgfmathprintnumber [fixed,fixed zerofill,precision=2]{1Y1.55646e0]}& $\mskip \thickmuskip \pm $\pgfmathprintnumber [fixed,fixed zerofill,precision=2]{1Y4.41339e-2]}\\%
kroB200&\pgfutilensuremath {1.029}&\pgfutilensuremath {1{,}915}&\pgfutilensuremath {9{,}575}&\pgfutilensuremath {210}&\pgfutilensuremath {183}&\pgfutilensuremath {503}&\pgfutilensuremath {504}&\pgfmathprintnumber [fixed,fixed zerofill,precision=2]{1Y2.26466e0]}& $\mskip \thickmuskip \pm $\pgfmathprintnumber [fixed,fixed zerofill,precision=2]{1Y1.86949e-1]}\\%
kroC100&\pgfutilensuremath {1.014}&\pgfutilensuremath {870}&\pgfutilensuremath {8.7}&\pgfutilensuremath {100}&\pgfutilensuremath {86}&\pgfutilensuremath {252}&\pgfutilensuremath {252}&\pgfmathprintnumber [fixed,fixed zerofill,precision=2]{1Y1.26881e0]}& $\mskip \thickmuskip \pm $\pgfmathprintnumber [fixed,fixed zerofill,precision=2]{1Y1.24757e-2]}\\%
kroD100&\pgfutilensuremath {1.022}&\pgfutilensuremath {874}&\pgfutilensuremath {8.74}&\pgfutilensuremath {97}&\pgfutilensuremath {71}&\pgfutilensuremath {248}&\pgfutilensuremath {251}&\pgfmathprintnumber [fixed,fixed zerofill,precision=2]{1Y1.37196e0]}& $\mskip \thickmuskip \pm $\pgfmathprintnumber [fixed,fixed zerofill,precision=2]{1Y2.5419e-2]}\\%
kroE100&\pgfutilensuremath {1.026}&\pgfutilensuremath {896}&\pgfutilensuremath {8.96}&\pgfutilensuremath {123}&\pgfutilensuremath {113}&\pgfutilensuremath {242}&\pgfutilensuremath {242}&\pgfmathprintnumber [fixed,fixed zerofill,precision=2]{1Y1.41753e0]}& $\mskip \thickmuskip \pm $\pgfmathprintnumber [fixed,fixed zerofill,precision=2]{1Y2.59726e-2]}\\%
lin105&\pgfutilensuremath {1.022}&\pgfutilensuremath {854}&\pgfutilensuremath {8.133}&\pgfutilensuremath {159}&\pgfutilensuremath {148}&\pgfutilensuremath {237}&\pgfutilensuremath {238}&\pgfmathprintnumber [fixed,fixed zerofill,precision=2]{1Y1.03952e0]}& $\mskip \thickmuskip \pm $\pgfmathprintnumber [fixed,fixed zerofill,precision=2]{1Y7.48827e-2]}\\%
lin318&\pgfutilensuremath {1.021}&\pgfutilensuremath {3{,}300}&\pgfutilensuremath {10.377}&\pgfutilensuremath {584}&\pgfutilensuremath {541}&\pgfutilensuremath {732}&\pgfutilensuremath {735}&\pgfmathprintnumber [fixed,fixed zerofill,precision=2]{1Y3.14955e0]}& $\mskip \thickmuskip \pm $\pgfmathprintnumber [fixed,fixed zerofill,precision=2]{1Y2.23381e-1]}\\%
linhp318&\pgfutilensuremath {1.021}&\pgfutilensuremath {3{,}300}&\pgfutilensuremath {10.377}&\pgfutilensuremath {584}&\pgfutilensuremath {541}&\pgfutilensuremath {732}&\pgfutilensuremath {735}&\pgfmathprintnumber [fixed,fixed zerofill,precision=2]{1Y3.18427e0]}& $\mskip \thickmuskip \pm $\pgfmathprintnumber [fixed,fixed zerofill,precision=2]{1Y2.2147e-1]}\\%
nrw1379&\pgfutilensuremath {1.017}&\pgfutilensuremath {12{,}828}&\pgfutilensuremath {9.302}&\pgfutilensuremath {964}&\pgfutilensuremath {851}&\pgfutilensuremath {3{,}735}&\pgfutilensuremath {3{,}741}&\pgfmathprintnumber [fixed,fixed zerofill,precision=2]{1Y8.5696e0]}& $\mskip \thickmuskip \pm $\pgfmathprintnumber [fixed,fixed zerofill,precision=2]{1Y3.39676e-1]}\\%
p654&\pgfutilensuremath {1.061}&\pgfutilensuremath {7{,}039}&\pgfutilensuremath {10.763}&\pgfutilensuremath {794}&\pgfutilensuremath {788}&\pgfutilensuremath {1{,}441}&\pgfutilensuremath {1{,}441}&\pgfmathprintnumber [fixed,fixed zerofill,precision=2]{1Y8.23492e0]}& $\mskip \thickmuskip \pm $\pgfmathprintnumber [fixed,fixed zerofill,precision=2]{1Y6.25813e-1]}\\%
pcb442&\pgfutilensuremath {1.025}&\pgfutilensuremath {3{,}852}&\pgfutilensuremath {8.715}&\pgfutilensuremath {609}&\pgfutilensuremath {528}&\pgfutilensuremath {1{,}049}&\pgfutilensuremath {1{,}060}&\pgfmathprintnumber [fixed,fixed zerofill,precision=2]{1Y3.39983e0]}& $\mskip \thickmuskip \pm $\pgfmathprintnumber [fixed,fixed zerofill,precision=2]{1Y1.08191e-1]}\\%
pcb1173&\pgfutilensuremath {1.063}&\pgfutilensuremath {12{,}931}&\pgfutilensuremath {11.024}&\pgfutilensuremath {2{,}315}&\pgfutilensuremath {2{,}054}&\pgfutilensuremath {2{,}712}&\pgfutilensuremath {2{,}727}&\pgfmathprintnumber [fixed,fixed zerofill,precision=2]{1Y1.05156e1]}& $\mskip \thickmuskip \pm $\pgfmathprintnumber [fixed,fixed zerofill,precision=2]{1Y1.27158e-1]}\\%
pcb3038&\pgfutilensuremath {1.037}&\pgfutilensuremath {30{,}748}&\pgfutilensuremath {10.121}&\pgfutilensuremath {4{,}113}&\pgfutilensuremath {3{,}656}&\pgfutilensuremath {7{,}586}&\pgfutilensuremath {7{,}615}&\pgfmathprintnumber [fixed,fixed zerofill,precision=2]{1Y2.06836e1]}& $\mskip \thickmuskip \pm $\pgfmathprintnumber [fixed,fixed zerofill,precision=2]{1Y3.4466e-1]}\\%
pla7397&\pgfutilensuremath {1.031}&\pgfutilensuremath {94{,}964}&\pgfutilensuremath {12.838}&\pgfutilensuremath {37{,}070}&\pgfutilensuremath {34{,}494}&\pgfutilensuremath {15{,}684}&\pgfutilensuremath {15{,}731}&\pgfmathprintnumber [fixed,fixed zerofill,precision=2]{1Y6.04305e2]}& $\mskip \thickmuskip \pm $\pgfmathprintnumber [fixed,fixed zerofill,precision=2]{1Y1.60623e0]}\\%
pla33810&\pgfutilensuremath {1.052}&\pgfutilensuremath {403{,}528}&\pgfutilensuremath {11.935}&\pgfutilensuremath {79{,}699}&\pgfutilensuremath {73{,}726}&\pgfutilensuremath {74{,}897}&\pgfutilensuremath {75{,}105}&\pgfmathprintnumber [fixed,fixed zerofill,precision=2]{1Y6.34451e2]}& $\mskip \thickmuskip \pm $\pgfmathprintnumber [fixed,fixed zerofill,precision=2]{1Y2.67319e0]}\\%
pla85900&\pgfutilensuremath {1.045}&\pgfutilensuremath {983{,}759}&\pgfutilensuremath {11.452}&\pgfutilensuremath {292{,}864}&\pgfutilensuremath {264{,}826}&\pgfutilensuremath {188{,}411}&\pgfutilensuremath {189{,}095}&\pgfmathprintnumber [fixed,fixed zerofill,precision=2]{1Y4.1632e3]}& $\mskip \thickmuskip \pm $\pgfmathprintnumber [fixed,fixed zerofill,precision=2]{1Y9.95819e0]}\\%
pr76&\pgfutilensuremath {1.042}&\pgfutilensuremath {688}&\pgfutilensuremath {9.053}&\pgfutilensuremath {163}&\pgfutilensuremath {154}&\pgfutilensuremath {178}&\pgfutilensuremath {178}&\pgfmathprintnumber [fixed,fixed zerofill,precision=2]{1Y1.08707e0]}& $\mskip \thickmuskip \pm $\pgfmathprintnumber [fixed,fixed zerofill,precision=2]{1Y1.21832e-2]}\\%
pr107&\pgfutilensuremath {1.003}&\pgfutilensuremath {1{,}290}&\pgfutilensuremath {12.056}&\pgfutilensuremath {16}&\pgfutilensuremath {16}&\pgfutilensuremath {275}&\pgfutilensuremath {275}&\pgfmathprintnumber [fixed,fixed zerofill,precision=2]{1Y1.62462e0]}& $\mskip \thickmuskip \pm $\pgfmathprintnumber [fixed,fixed zerofill,precision=2]{1Y5.9719e-2]}\\%
pr124&\pgfutilensuremath {1.025}&\pgfutilensuremath {1{,}419}&\pgfutilensuremath {11.444}&\pgfutilensuremath {122}&\pgfutilensuremath {117}&\pgfutilensuremath {271}&\pgfutilensuremath {271}&\pgfmathprintnumber [fixed,fixed zerofill,precision=2]{1Y1.70563e0]}& $\mskip \thickmuskip \pm $\pgfmathprintnumber [fixed,fixed zerofill,precision=2]{1Y1.2761e-1]}\\%
pr136&\pgfutilensuremath {1.024}&\pgfutilensuremath {988}&\pgfutilensuremath {7.265}&\pgfutilensuremath {248}&\pgfutilensuremath {248}&\pgfutilensuremath {280}&\pgfutilensuremath {280}&\pgfmathprintnumber [fixed,fixed zerofill,precision=2]{1Y1.29662e0]}& $\mskip \thickmuskip \pm $\pgfmathprintnumber [fixed,fixed zerofill,precision=2]{1Y4.1591e-2]}\\%
pr144&\pgfutilensuremath {1.086}&\pgfutilensuremath {2{,}532}&\pgfutilensuremath {17.583}&\pgfutilensuremath {645}&\pgfutilensuremath {645}&\pgfutilensuremath {290}&\pgfutilensuremath {290}&\pgfmathprintnumber [fixed,fixed zerofill,precision=2]{1Y3.13065e0]}& $\mskip \thickmuskip \pm $\pgfmathprintnumber [fixed,fixed zerofill,precision=2]{1Y2.54256e-1]}\\%
pr152&\pgfutilensuremath {1.177}&\pgfutilensuremath {2{,}844}&\pgfutilensuremath {18.711}&\pgfutilensuremath {198}&\pgfutilensuremath {190}&\pgfutilensuremath {358}&\pgfutilensuremath {358}&\pgfmathprintnumber [fixed,fixed zerofill,precision=2]{1Y3.7204e0]}& $\mskip \thickmuskip \pm $\pgfmathprintnumber [fixed,fixed zerofill,precision=2]{1Y2.27602e-1]}\\%
pr226&\pgfutilensuremath {1.058}&\pgfutilensuremath {4{,}683}&\pgfutilensuremath {20.721}&\pgfutilensuremath {525}&\pgfutilensuremath {525}&\pgfutilensuremath {452}&\pgfutilensuremath {452}&\pgfmathprintnumber [fixed,fixed zerofill,precision=2]{1Y5.9464e0]}& $\mskip \thickmuskip \pm $\pgfmathprintnumber [fixed,fixed zerofill,precision=2]{1Y3.2348e-1]}\\%
pr264&\pgfutilensuremath {1.115}&\pgfutilensuremath {3{,}096}&\pgfutilensuremath {11.727}&\pgfutilensuremath {463}&\pgfutilensuremath {448}&\pgfutilensuremath {563}&\pgfutilensuremath {563}&\pgfmathprintnumber [fixed,fixed zerofill,precision=2]{1Y4.61812e0]}& $\mskip \thickmuskip \pm $\pgfmathprintnumber [fixed,fixed zerofill,precision=2]{1Y3.55467e-1]}\\%
pr299&\pgfutilensuremath {1.031}&\pgfutilensuremath {3{,}087}&\pgfutilensuremath {10.324}&\pgfutilensuremath {736}&\pgfutilensuremath {595}&\pgfutilensuremath {701}&\pgfutilensuremath {711}&\pgfmathprintnumber [fixed,fixed zerofill,precision=2]{1Y4.7602e0]}& $\mskip \thickmuskip \pm $\pgfmathprintnumber [fixed,fixed zerofill,precision=2]{1Y6.20938e-2]}\\%
pr439&\pgfutilensuremath {1.055}&\pgfutilensuremath {5{,}324}&\pgfutilensuremath {12.128}&\pgfutilensuremath {1{,}306}&\pgfutilensuremath {1{,}160}&\pgfutilensuremath {991}&\pgfutilensuremath {996}&\pgfmathprintnumber [fixed,fixed zerofill,precision=2]{1Y1.15715e1]}& $\mskip \thickmuskip \pm $\pgfmathprintnumber [fixed,fixed zerofill,precision=2]{1Y2.99205e-1]}\\%
pr1002&\pgfutilensuremath {1.031}&\pgfutilensuremath {11{,}106}&\pgfutilensuremath {11.084}&\pgfutilensuremath {1{,}507}&\pgfutilensuremath {1{,}294}&\pgfutilensuremath {2{,}467}&\pgfutilensuremath {2{,}482}&\pgfmathprintnumber [fixed,fixed zerofill,precision=2]{1Y9.15588e0]}& $\mskip \thickmuskip \pm $\pgfmathprintnumber [fixed,fixed zerofill,precision=2]{1Y2.13194e-1]}\\%
pr2392&\pgfutilensuremath {1.038}&\pgfutilensuremath {28{,}911}&\pgfutilensuremath {12.087}&\pgfutilensuremath {6{,}405}&\pgfutilensuremath {5{,}296}&\pgfutilensuremath {5{,}625}&\pgfutilensuremath {5{,}697}&\pgfmathprintnumber [fixed,fixed zerofill,precision=2]{1Y3.30673e1]}& $\mskip \thickmuskip \pm $\pgfmathprintnumber [fixed,fixed zerofill,precision=2]{1Y2.70193e-1]}\\%
rat99&\pgfutilensuremath {1.013}&\pgfutilensuremath {684}&\pgfutilensuremath {6.909}&\pgfutilensuremath {64}&\pgfutilensuremath {61}&\pgfutilensuremath {254}&\pgfutilensuremath {254}&\pgfmathprintnumber [fixed,fixed zerofill,precision=2]{1Y1.02316e0]}& $\mskip \thickmuskip \pm $\pgfmathprintnumber [fixed,fixed zerofill,precision=2]{1Y1.59508e-2]}\\%
rat195&\pgfutilensuremath {1.014}&\pgfutilensuremath {1{,}408}&\pgfutilensuremath {7.221}&\pgfutilensuremath {206}&\pgfutilensuremath {197}&\pgfutilensuremath {472}&\pgfutilensuremath {472}&\pgfmathprintnumber [fixed,fixed zerofill,precision=2]{1Y1.51244e0]}& $\mskip \thickmuskip \pm $\pgfmathprintnumber [fixed,fixed zerofill,precision=2]{1Y1.22603e-1]}\\%
rat575&\pgfutilensuremath {1.012}&\pgfutilensuremath {4{,}912}&\pgfutilensuremath {8.543}&\pgfutilensuremath {523}&\pgfutilensuremath {459}&\pgfutilensuremath {1{,}493}&\pgfutilensuremath {1{,}499}&\pgfmathprintnumber [fixed,fixed zerofill,precision=2]{1Y3.48265e0]}& $\mskip \thickmuskip \pm $\pgfmathprintnumber [fixed,fixed zerofill,precision=2]{1Y3.62114e-2]}\\%
rat783&\pgfutilensuremath {1.019}&\pgfutilensuremath {7{,}195}&\pgfutilensuremath {9.189}&\pgfutilensuremath {719}&\pgfutilensuremath {650}&\pgfutilensuremath {2{,}033}&\pgfutilensuremath {2{,}035}&\pgfmathprintnumber [fixed,fixed zerofill,precision=2]{1Y4.86667e0]}& $\mskip \thickmuskip \pm $\pgfmathprintnumber [fixed,fixed zerofill,precision=2]{1Y1.38988e-1]}\\%
rd100&\pgfutilensuremath {1.012}&\pgfutilensuremath {859}&\pgfutilensuremath {8.59}&\pgfutilensuremath {88}&\pgfutilensuremath {75}&\pgfutilensuremath {252}&\pgfutilensuremath {252}&\pgfmathprintnumber [fixed,fixed zerofill,precision=2]{1Y1.34099e0]}& $\mskip \thickmuskip \pm $\pgfmathprintnumber [fixed,fixed zerofill,precision=2]{1Y9.15567e-3]}\\%
rd400&\pgfutilensuremath {1.023}&\pgfutilensuremath {3{,}976}&\pgfutilensuremath {9.94}&\pgfutilensuremath {398}&\pgfutilensuremath {358}&\pgfutilensuremath {1{,}027}&\pgfutilensuremath {1{,}032}&\pgfmathprintnumber [fixed,fixed zerofill,precision=2]{1Y3.46382e0]}& $\mskip \thickmuskip \pm $\pgfmathprintnumber [fixed,fixed zerofill,precision=2]{1Y2.42596e-1]}\\%
rl1304&\pgfutilensuremath {1.053}&\pgfutilensuremath {22{,}761}&\pgfutilensuremath {17.455}&\pgfutilensuremath {2{,}570}&\pgfutilensuremath {2{,}066}&\pgfutilensuremath {3{,}134}&\pgfutilensuremath {3{,}162}&\pgfmathprintnumber [fixed,fixed zerofill,precision=2]{1Y2.28403e1]}& $\mskip \thickmuskip \pm $\pgfmathprintnumber [fixed,fixed zerofill,precision=2]{1Y1.13826e0]}\\%
rl1323&\pgfutilensuremath {1.048}&\pgfutilensuremath {20{,}755}&\pgfutilensuremath {15.688}&\pgfutilensuremath {2{,}720}&\pgfutilensuremath {2{,}209}&\pgfutilensuremath {3{,}175}&\pgfutilensuremath {3{,}202}&\pgfmathprintnumber [fixed,fixed zerofill,precision=2]{1Y2.12004e1]}& $\mskip \thickmuskip \pm $\pgfmathprintnumber [fixed,fixed zerofill,precision=2]{1Y1.00314e0]}\\%
rl1889&\pgfutilensuremath {1.078}&\pgfutilensuremath {35{,}382}&\pgfutilensuremath {18.731}&\pgfutilensuremath {6{,}792}&\pgfutilensuremath {5{,}604}&\pgfutilensuremath {4{,}231}&\pgfutilensuremath {4{,}276}&\pgfmathprintnumber [fixed,fixed zerofill,precision=2]{1Y6.41078e1]}& $\mskip \thickmuskip \pm $\pgfmathprintnumber [fixed,fixed zerofill,precision=2]{1Y3.6737e-1]}\\%
rl5915&\pgfutilensuremath {1.052}&\pgfutilensuremath {91{,}750}&\pgfutilensuremath {15.511}&\pgfutilensuremath {10{,}689}&\pgfutilensuremath {9{,}355}&\pgfutilensuremath {14{,}899}&\pgfutilensuremath {14{,}928}&\pgfmathprintnumber [fixed,fixed zerofill,precision=2]{1Y1.27925e2]}& $\mskip \thickmuskip \pm $\pgfmathprintnumber [fixed,fixed zerofill,precision=2]{1Y3.02507e0]}\\%
rl5934&\pgfutilensuremath {1.068}&\pgfutilensuremath {93{,}545}&\pgfutilensuremath {15.764}&\pgfutilensuremath {12{,}209}&\pgfutilensuremath {10{,}074}&\pgfutilensuremath {14{,}744}&\pgfutilensuremath {14{,}823}&\pgfmathprintnumber [fixed,fixed zerofill,precision=2]{1Y1.29424e2]}& $\mskip \thickmuskip \pm $\pgfmathprintnumber [fixed,fixed zerofill,precision=2]{1Y2.00204e0]}\\%
rl11849&\pgfutilensuremath {1.048}&\pgfutilensuremath {148{,}715}&\pgfutilensuremath {12.551}&\pgfutilensuremath {19{,}283}&\pgfutilensuremath {16{,}746}&\pgfutilensuremath {30{,}167}&\pgfutilensuremath {30{,}240}&\pgfmathprintnumber [fixed,fixed zerofill,precision=2]{1Y3.08931e2]}& $\mskip \thickmuskip \pm $\pgfmathprintnumber [fixed,fixed zerofill,precision=2]{1Y2.44415e0]}\\%
st70&\pgfutilensuremath {1.029}&\pgfutilensuremath {553}&\pgfutilensuremath {7.9}&\pgfutilensuremath {81}&\pgfutilensuremath {69}&\pgfutilensuremath {170}&\pgfutilensuremath {170}&\pgfmathprintnumber [fixed,fixed zerofill,precision=2]{1Y9.40079e-1]}& $\mskip \thickmuskip \pm $\pgfmathprintnumber [fixed,fixed zerofill,precision=2]{1Y3.59003e-2]}\\%
ts225&\pgfutilensuremath {1}&\pgfutilensuremath {3{,}408}&\pgfutilensuremath {15.147}&\pgfutilensuremath {2{,}592}&\pgfutilensuremath {2{,}144}&\pgfutilensuremath {240}&\pgfutilensuremath {304}&\pgfmathprintnumber [fixed,fixed zerofill,precision=2]{1Y3.05612e1]}& $\mskip \thickmuskip \pm $\pgfmathprintnumber [fixed,fixed zerofill,precision=2]{1Y2.24947e-1]}\\%
tsp225&\pgfutilensuremath {1.009}&\pgfutilensuremath {1{,}762}&\pgfutilensuremath {7.831}&\pgfutilensuremath {212}&\pgfutilensuremath {188}&\pgfutilensuremath {544}&\pgfutilensuremath {545}&\pgfmathprintnumber [fixed,fixed zerofill,precision=2]{1Y1.68788e0]}& $\mskip \thickmuskip \pm $\pgfmathprintnumber [fixed,fixed zerofill,precision=2]{1Y1.14519e-1]}\\%
u159&\pgfutilensuremath {1.019}&\pgfutilensuremath {1{,}447}&\pgfutilensuremath {9.101}&\pgfutilensuremath {245}&\pgfutilensuremath {219}&\pgfutilensuremath {351}&\pgfutilensuremath {351}&\pgfmathprintnumber [fixed,fixed zerofill,precision=2]{1Y1.98636e0]}& $\mskip \thickmuskip \pm $\pgfmathprintnumber [fixed,fixed zerofill,precision=2]{1Y1.68417e-1]}\\%
u574&\pgfutilensuremath {1.038}&\pgfutilensuremath {6{,}195}&\pgfutilensuremath {10.793}&\pgfutilensuremath {718}&\pgfutilensuremath {636}&\pgfutilensuremath {1{,}438}&\pgfutilensuremath {1{,}444}&\pgfmathprintnumber [fixed,fixed zerofill,precision=2]{1Y4.96243e0]}& $\mskip \thickmuskip \pm $\pgfmathprintnumber [fixed,fixed zerofill,precision=2]{1Y1.5826e-1]}\\%
u724&\pgfutilensuremath {1.018}&\pgfutilensuremath {6{,}762}&\pgfutilensuremath {9.34}&\pgfutilensuremath {691}&\pgfutilensuremath {628}&\pgfutilensuremath {1{,}844}&\pgfutilensuremath {1{,}847}&\pgfmathprintnumber [fixed,fixed zerofill,precision=2]{1Y5.01454e0]}& $\mskip \thickmuskip \pm $\pgfmathprintnumber [fixed,fixed zerofill,precision=2]{1Y1.12434e-1]}\\%
u1060&\pgfutilensuremath {1.049}&\pgfutilensuremath {11{,}885}&\pgfutilensuremath {11.212}&\pgfutilensuremath {1{,}304}&\pgfutilensuremath {1{,}173}&\pgfutilensuremath {2{,}683}&\pgfutilensuremath {2{,}691}&\pgfmathprintnumber [fixed,fixed zerofill,precision=2]{1Y9.32977e0]}& $\mskip \thickmuskip \pm $\pgfmathprintnumber [fixed,fixed zerofill,precision=2]{1Y4.456e-1]}\\%
u1432&\pgfutilensuremath {1.003}&\pgfutilensuremath {9{,}039}&\pgfutilensuremath {6.312}&\pgfutilensuremath {2{,}287}&\pgfutilensuremath {2{,}224}&\pgfutilensuremath {3{,}145}&\pgfutilensuremath {3{,}154}&\pgfmathprintnumber [fixed,fixed zerofill,precision=2]{1Y7.47617e0]}& $\mskip \thickmuskip \pm $\pgfmathprintnumber [fixed,fixed zerofill,precision=2]{1Y1.61044e-1]}\\%
u1817&\pgfutilensuremath {1.007}&\pgfutilensuremath {16{,}214}&\pgfutilensuremath {8.924}&\pgfutilensuremath {3{,}229}&\pgfutilensuremath {3{,}063}&\pgfutilensuremath {3{,}996}&\pgfutilensuremath {4{,}007}&\pgfmathprintnumber [fixed,fixed zerofill,precision=2]{1Y1.47311e1]}& $\mskip \thickmuskip \pm $\pgfmathprintnumber [fixed,fixed zerofill,precision=2]{1Y3.34294e-1]}\\%
u2152&\pgfutilensuremath {1.007}&\pgfutilensuremath {17{,}709}&\pgfutilensuremath {8.229}&\pgfutilensuremath {3{,}811}&\pgfutilensuremath {3{,}674}&\pgfutilensuremath {4{,}597}&\pgfutilensuremath {4{,}603}&\pgfmathprintnumber [fixed,fixed zerofill,precision=2]{1Y1.52531e1]}& $\mskip \thickmuskip \pm $\pgfmathprintnumber [fixed,fixed zerofill,precision=2]{1Y2.70145e-1]}\\%
u2319&\pgfutilensuremath {1}&\pgfutilensuremath {10{,}179}&\pgfutilensuremath {4.389}&\pgfutilensuremath {4{,}061}&\pgfutilensuremath {4{,}061}&\pgfutilensuremath {4{,}853}&\pgfutilensuremath {4{,}853}&\pgfmathprintnumber [fixed,fixed zerofill,precision=2]{1Y8.2631e0]}& $\mskip \thickmuskip \pm $\pgfmathprintnumber [fixed,fixed zerofill,precision=2]{1Y1.0585e-1]}\\%
ulysses16&\pgfutilensuremath {1.014}&\pgfutilensuremath {74}&\pgfutilensuremath {4{,}625}&\pgfutilensuremath {0}&\pgfutilensuremath {0}&\pgfutilensuremath {38}&\pgfutilensuremath {38}&\pgfmathprintnumber [fixed,fixed zerofill,precision=2]{1Y6.49023e-2]}& $\mskip \thickmuskip \pm $\pgfmathprintnumber [fixed,fixed zerofill,precision=2]{1Y1.96262e-2]}\\%
ulysses22&\pgfutilensuremath {1.027}&\pgfutilensuremath {132}&\pgfutilensuremath {6}&\pgfutilensuremath {0}&\pgfutilensuremath {0}&\pgfutilensuremath {56}&\pgfutilensuremath {56}&\pgfmathprintnumber [fixed,fixed zerofill,precision=2]{1Y1.64044e-1]}& $\mskip \thickmuskip \pm $\pgfmathprintnumber [fixed,fixed zerofill,precision=2]{1Y1.93106e-2]}\\%
usa13509&\pgfutilensuremath {1.052}&\pgfutilensuremath {174{,}841}&\pgfutilensuremath {12.943}&\pgfutilensuremath {23{,}544}&\pgfutilensuremath {20{,}713}&\pgfutilensuremath {33{,}492}&\pgfutilensuremath {33{,}636}&\pgfmathprintnumber [fixed,fixed zerofill,precision=2]{1Y2.74644e2]}& $\mskip \thickmuskip \pm $\pgfmathprintnumber [fixed,fixed zerofill,precision=2]{1Y3.04782e0]}\\%
vm1084&\pgfutilensuremath {1.029}&\pgfutilensuremath {11{,}668}&\pgfutilensuremath {10.764}&\pgfutilensuremath {1{,}164}&\pgfutilensuremath {1{,}027}&\pgfutilensuremath {2{,}466}&\pgfutilensuremath {2{,}480}&\pgfmathprintnumber [fixed,fixed zerofill,precision=2]{1Y9.58794e0]}& $\mskip \thickmuskip \pm $\pgfmathprintnumber [fixed,fixed zerofill,precision=2]{1Y1.16919e-1]}\\%
vm1748&\pgfutilensuremath {1.034}&\pgfutilensuremath {20{,}712}&\pgfutilensuremath {11.849}&\pgfutilensuremath {2{,}848}&\pgfutilensuremath {2{,}307}&\pgfutilensuremath {4{,}032}&\pgfutilensuremath {4{,}049}&\pgfmathprintnumber [fixed,fixed zerofill,precision=2]{1Y2.53607e1]}& $\mskip \thickmuskip \pm $\pgfmathprintnumber [fixed,fixed zerofill,precision=2]{1Y8.55504e-1]}\\%
\end {longtabu}\egroup %
\endgroup %

\newpage
\section{TSPLIB Examples}
\label{app::tsplib_examples}

\begin{figure}[!h]
\includegraphics[width=0.88\columnwidth]{brd14051_lmt}
\caption{LMT-skeleton of \texttt{brd14501}.}
\label{fig::brd14501_lmt}
\end{figure}

\begin{figure}[!h]
\includegraphics[width=0.88\columnwidth]{brd14051_mwt}
\caption{MWT of \texttt{brd14501}.}
\end{figure}

\begin{figure}
\includegraphics[width=\columnwidth]{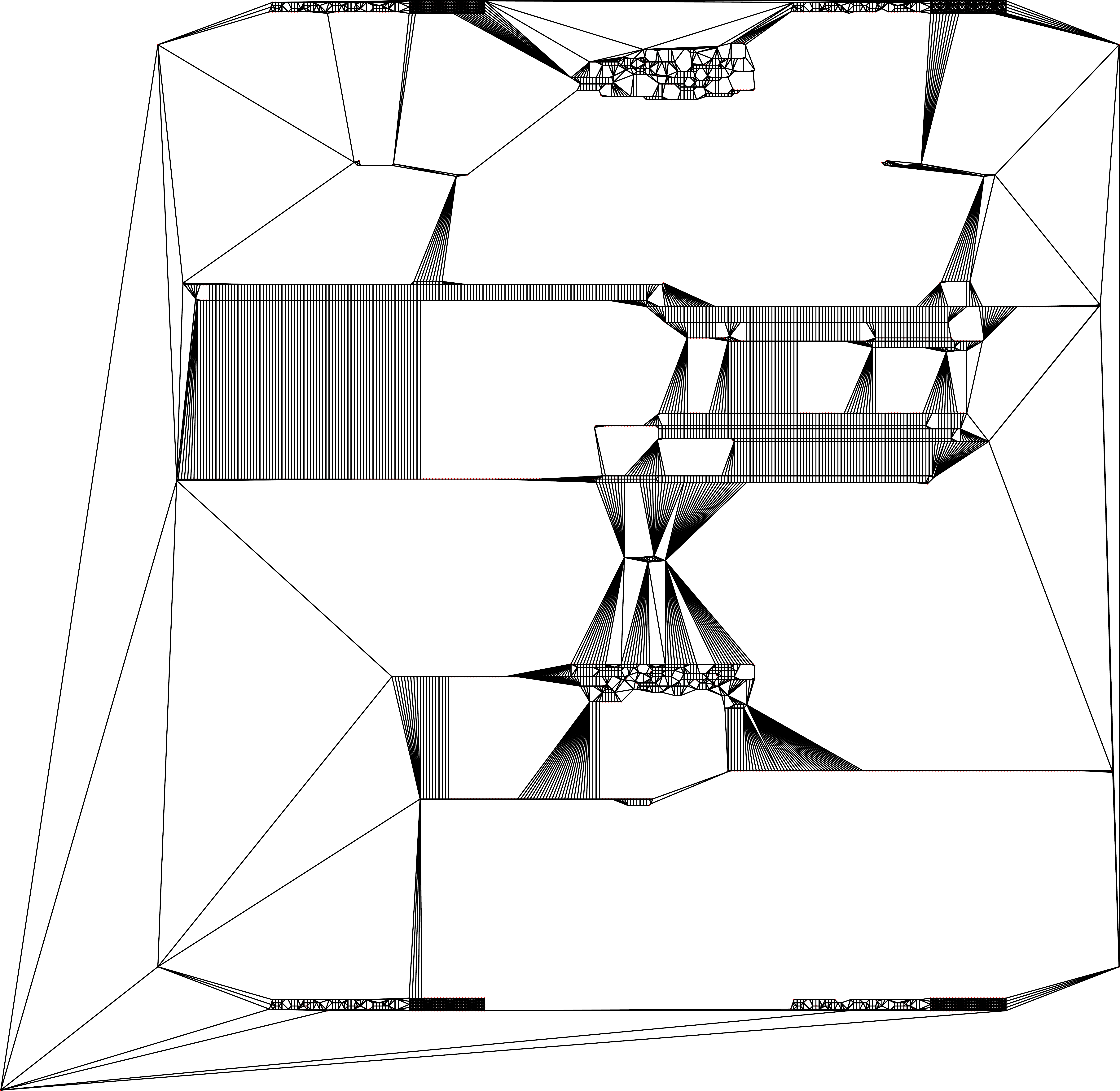}
\caption{LMT-skeleton of \texttt{fl3795}}
\end{figure}

\begin{figure}
\includegraphics[width=\columnwidth]{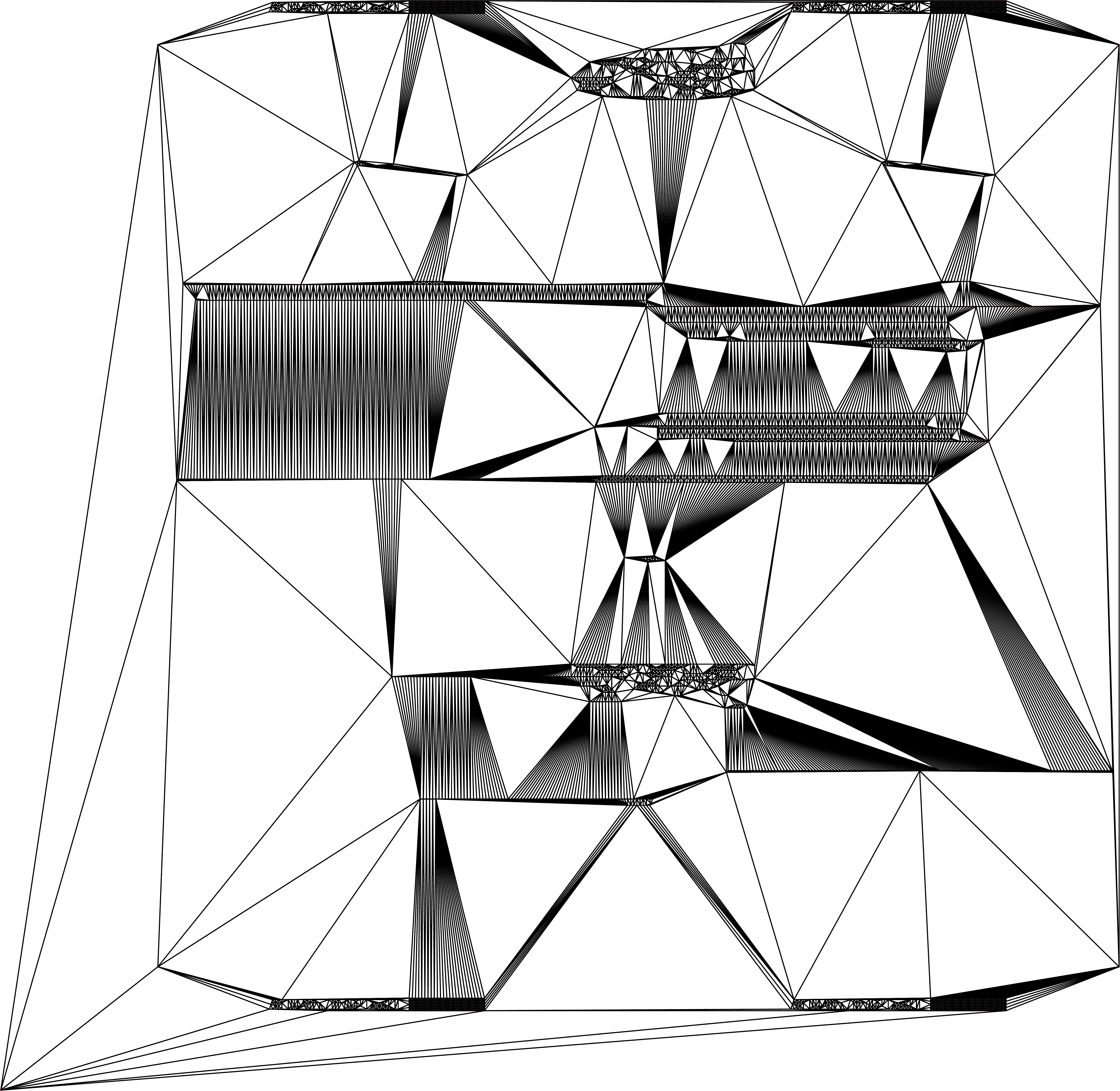}
\caption{MWT of \texttt{fl3795}}
\end{figure}

\newpage
\section{VLSI}
\label{app::vlsi}

The following table shows the results on VLSI point sets, which can be downloaded at \texttt{http://www.math.uwaterloo.ca/tsp/vlsi/}. According to the description given there, this collection of 102 TSP instances was provided by Andre Rohe, based on VLSI data sets studied at the Forschungsinstitut für Diskrete Mathematik, Universität Bonn. 

The name of all instances contains their size.
To save some space and properly layout the table, the absolute number of possible edges remaining after the diamond test is not given, instead column DT shows the factor w.r.t. instance size.
Each instance was solved ten times; mean values of the runtime of each part of the algorithm are given in the corresponding columns.
Standard deviations were small; they are left out for layout reasons.

The LMT-heuristic was sufficient to solve the MWT for all instances, except \texttt{lra498378}, which contained two non-simple polygonal faces.

\begin{landscape}
	\begingroup \small %
\bgroup \small \setlength {\tabcolsep }{4pt}\begin {longtabu}{c|rrr|rr|rrrrrr}%
\caption {VLSI statistics}\\ \toprule & \multicolumn {3}{c|}{Possible edges after} & \multicolumn {2}{c|}{Certain edges} & \multicolumn {6}{c}{Time in ms}\\Instance&DT&LMT&LMT+&LMT&LMT+&Total&DT&LMT-Init&LMT&LMT+&Dyn. Prog.\\\toprule \endhead \midrule \multicolumn {12}{r}{{Continued on next page}} \\ \bottomrule \endfoot \midrule \multicolumn {12}{r}{{End}} \\ \bottomrule \endlastfoot %
ara238025&\pgfutilensuremath {12.44}&\pgfutilensuremath {403{,}634}&\pgfutilensuremath {351{,}298}&\pgfutilensuremath {578{,}814}&\pgfutilensuremath {581{,}677}&\pgfutilensuremath {15{,}325.50}&\pgfutilensuremath {4{,}954.62}&\pgfutilensuremath {446.95}&\pgfutilensuremath {496.27}&\pgfutilensuremath {9{,}279.19}&\pgfutilensuremath {148.00}\\%
bby34656&\pgfutilensuremath {12.32}&\pgfutilensuremath {56{,}775}&\pgfutilensuremath {50{,}720}&\pgfutilensuremath {84{,}829}&\pgfutilensuremath {85{,}194}&\pgfutilensuremath {318.10}&\pgfutilensuremath {113.57}&\pgfutilensuremath {65.39}&\pgfutilensuremath {52.82}&\pgfutilensuremath {65.47}&\pgfutilensuremath {20.79}\\%
bbz25234&\pgfutilensuremath {12.46}&\pgfutilensuremath {40{,}912}&\pgfutilensuremath {36{,}697}&\pgfutilensuremath {61{,}588}&\pgfutilensuremath {61{,}762}&\pgfutilensuremath {240.44}&\pgfutilensuremath {85.97}&\pgfutilensuremath {47.97}&\pgfutilensuremath {40.89}&\pgfutilensuremath {49.70}&\pgfutilensuremath {15.86}\\%
bch2762&\pgfutilensuremath {12.39}&\pgfutilensuremath {5{,}434}&\pgfutilensuremath {4{,}733}&\pgfutilensuremath {6{,}452}&\pgfutilensuremath {6{,}467}&\pgfutilensuremath {29.61}&\pgfutilensuremath {9.00}&\pgfutilensuremath {4.75}&\pgfutilensuremath {6.56}&\pgfutilensuremath {7.84}&\pgfutilensuremath {1.46}\\%
bck2217&\pgfutilensuremath {11.7}&\pgfutilensuremath {4{,}138}&\pgfutilensuremath {3{,}545}&\pgfutilensuremath {5{,}333}&\pgfutilensuremath {5{,}354}&\pgfutilensuremath {23.40}&\pgfutilensuremath {6.98}&\pgfutilensuremath {3.61}&\pgfutilensuremath {4.92}&\pgfutilensuremath {6.74}&\pgfutilensuremath {1.15}\\%
bcl380&\pgfutilensuremath {11.38}&\pgfutilensuremath {756}&\pgfutilensuremath {683}&\pgfutilensuremath {854}&\pgfutilensuremath {854}&\pgfutilensuremath {4.96}&\pgfutilensuremath {2.01}&\pgfutilensuremath {0.59}&\pgfutilensuremath {1.47}&\pgfutilensuremath {0.66}&\pgfutilensuremath {0.23}\\%
beg3293&\pgfutilensuremath {12.83}&\pgfutilensuremath {7{,}376}&\pgfutilensuremath {6{,}308}&\pgfutilensuremath {7{,}677}&\pgfutilensuremath {7{,}708}&\pgfutilensuremath {41.56}&\pgfutilensuremath {11.61}&\pgfutilensuremath {5.79}&\pgfutilensuremath {8.72}&\pgfutilensuremath {13.75}&\pgfutilensuremath {1.69}\\%
bgb4355&\pgfutilensuremath {11.59}&\pgfutilensuremath {6{,}517}&\pgfutilensuremath {5{,}713}&\pgfutilensuremath {10{,}708}&\pgfutilensuremath {10{,}754}&\pgfutilensuremath {36.06}&\pgfutilensuremath {13.91}&\pgfutilensuremath {6.98}&\pgfutilensuremath {6.54}&\pgfutilensuremath {6.40}&\pgfutilensuremath {2.23}\\%
bgd4396&\pgfutilensuremath {12.86}&\pgfutilensuremath {8{,}891}&\pgfutilensuremath {7{,}861}&\pgfutilensuremath {10{,}400}&\pgfutilensuremath {10{,}433}&\pgfutilensuremath {46.26}&\pgfutilensuremath {15.42}&\pgfutilensuremath {7.87}&\pgfutilensuremath {8.64}&\pgfutilensuremath {12.01}&\pgfutilensuremath {2.31}\\%
bgf4475&\pgfutilensuremath {12.49}&\pgfutilensuremath {8{,}816}&\pgfutilensuremath {7{,}868}&\pgfutilensuremath {10{,}530}&\pgfutilensuremath {10{,}563}&\pgfutilensuremath {45.50}&\pgfutilensuremath {16.02}&\pgfutilensuremath {7.72}&\pgfutilensuremath {8.03}&\pgfutilensuremath {11.36}&\pgfutilensuremath {2.35}\\%
bna56769&\pgfutilensuremath {12.35}&\pgfutilensuremath {91{,}991}&\pgfutilensuremath {82{,}706}&\pgfutilensuremath {139{,}371}&\pgfutilensuremath {139{,}938}&\pgfutilensuremath {512.02}&\pgfutilensuremath {180.09}&\pgfutilensuremath {107.54}&\pgfutilensuremath {82.92}&\pgfutilensuremath {106.94}&\pgfutilensuremath {34.43}\\%
bnd7168&\pgfutilensuremath {12.04}&\pgfutilensuremath {12{,}274}&\pgfutilensuremath {10{,}781}&\pgfutilensuremath {17{,}377}&\pgfutilensuremath {17{,}434}&\pgfutilensuremath {66.48}&\pgfutilensuremath {23.19}&\pgfutilensuremath {12.64}&\pgfutilensuremath {11.08}&\pgfutilensuremath {15.56}&\pgfutilensuremath {3.99}\\%
boa28924&\pgfutilensuremath {12.13}&\pgfutilensuremath {48{,}935}&\pgfutilensuremath {44{,}347}&\pgfutilensuremath {69{,}974}&\pgfutilensuremath {70{,}154}&\pgfutilensuremath {271.97}&\pgfutilensuremath {96.98}&\pgfutilensuremath {54.02}&\pgfutilensuremath {45.72}&\pgfutilensuremath {57.37}&\pgfutilensuremath {17.83}\\%
bva2144&\pgfutilensuremath {11.5}&\pgfutilensuremath {3{,}619}&\pgfutilensuremath {3{,}093}&\pgfutilensuremath {5{,}200}&\pgfutilensuremath {5{,}216}&\pgfutilensuremath {20.92}&\pgfutilensuremath {6.70}&\pgfutilensuremath {3.41}&\pgfutilensuremath {4.48}&\pgfutilensuremath {5.27}&\pgfutilensuremath {1.05}\\%
dan59296&\pgfutilensuremath {12.5}&\pgfutilensuremath {100{,}280}&\pgfutilensuremath {88{,}563}&\pgfutilensuremath {144{,}508}&\pgfutilensuremath {145{,}112}&\pgfutilensuremath {570.98}&\pgfutilensuremath {191.87}&\pgfutilensuremath {115.42}&\pgfutilensuremath {90.45}&\pgfutilensuremath {136.86}&\pgfutilensuremath {36.28}\\%
dbj2924&\pgfutilensuremath {11.53}&\pgfutilensuremath {4{,}970}&\pgfutilensuremath {4{,}271}&\pgfutilensuremath {6{,}999}&\pgfutilensuremath {7{,}033}&\pgfutilensuremath {27.15}&\pgfutilensuremath {8.91}&\pgfutilensuremath {4.65}&\pgfutilensuremath {6.14}&\pgfutilensuremath {6.10}&\pgfutilensuremath {1.34}\\%
dca1389&\pgfutilensuremath {11.45}&\pgfutilensuremath {2{,}104}&\pgfutilensuremath {1{,}817}&\pgfutilensuremath {3{,}388}&\pgfutilensuremath {3{,}400}&\pgfutilensuremath {14.17}&\pgfutilensuremath {5.02}&\pgfutilensuremath {2.20}&\pgfutilensuremath {3.68}&\pgfutilensuremath {2.56}&\pgfutilensuremath {0.70}\\%
dcb2086&\pgfutilensuremath {12.36}&\pgfutilensuremath {4{,}001}&\pgfutilensuremath {3{,}563}&\pgfutilensuremath {4{,}924}&\pgfutilensuremath {4{,}941}&\pgfutilensuremath {21.76}&\pgfutilensuremath {6.85}&\pgfutilensuremath {3.53}&\pgfutilensuremath {5.38}&\pgfutilensuremath {4.93}&\pgfutilensuremath {1.06}\\%
dcc1911&\pgfutilensuremath {11.35}&\pgfutilensuremath {2{,}718}&\pgfutilensuremath {2{,}480}&\pgfutilensuremath {4{,}701}&\pgfutilensuremath {4{,}716}&\pgfutilensuremath {17.54}&\pgfutilensuremath {6.15}&\pgfutilensuremath {2.99}&\pgfutilensuremath {5.43}&\pgfutilensuremath {2.01}&\pgfutilensuremath {0.96}\\%
dea2382&\pgfutilensuremath {12.31}&\pgfutilensuremath {5{,}182}&\pgfutilensuremath {4{,}326}&\pgfutilensuremath {5{,}521}&\pgfutilensuremath {5{,}560}&\pgfutilensuremath {30.76}&\pgfutilensuremath {8.60}&\pgfutilensuremath {4.02}&\pgfutilensuremath {6.60}&\pgfutilensuremath {10.32}&\pgfutilensuremath {1.22}\\%
dga9698&\pgfutilensuremath {13.01}&\pgfutilensuremath {21{,}943}&\pgfutilensuremath {18{,}586}&\pgfutilensuremath {22{,}603}&\pgfutilensuremath {22{,}700}&\pgfutilensuremath {122.77}&\pgfutilensuremath {33.76}&\pgfutilensuremath {18.84}&\pgfutilensuremath {18.03}&\pgfutilensuremath {46.50}&\pgfutilensuremath {5.62}\\%
dhb3386&\pgfutilensuremath {11.97}&\pgfutilensuremath {6{,}326}&\pgfutilensuremath {5{,}471}&\pgfutilensuremath {8{,}089}&\pgfutilensuremath {8{,}119}&\pgfutilensuremath {34.69}&\pgfutilensuremath {10.73}&\pgfutilensuremath {5.56}&\pgfutilensuremath {7.35}&\pgfutilensuremath {9.37}&\pgfutilensuremath {1.66}\\%
dja1436&\pgfutilensuremath {11.6}&\pgfutilensuremath {2{,}472}&\pgfutilensuremath {2{,}084}&\pgfutilensuremath {3{,}474}&\pgfutilensuremath {3{,}486}&\pgfutilensuremath {16.39}&\pgfutilensuremath {4.66}&\pgfutilensuremath {2.28}&\pgfutilensuremath {4.09}&\pgfutilensuremath {4.63}&\pgfutilensuremath {0.73}\\%
djb2036&\pgfutilensuremath {12.23}&\pgfutilensuremath {3{,}810}&\pgfutilensuremath {3{,}185}&\pgfutilensuremath {4{,}892}&\pgfutilensuremath {4{,}916}&\pgfutilensuremath {23.32}&\pgfutilensuremath {6.84}&\pgfutilensuremath {3.44}&\pgfutilensuremath {5.00}&\pgfutilensuremath {6.98}&\pgfutilensuremath {1.06}\\%
djc1785&\pgfutilensuremath {11.27}&\pgfutilensuremath {2{,}969}&\pgfutilensuremath {2{,}545}&\pgfutilensuremath {4{,}375}&\pgfutilensuremath {4{,}399}&\pgfutilensuremath {18.42}&\pgfutilensuremath {5.70}&\pgfutilensuremath {2.77}&\pgfutilensuremath {4.70}&\pgfutilensuremath {4.32}&\pgfutilensuremath {0.92}\\%
dka1376&\pgfutilensuremath {11.94}&\pgfutilensuremath {2{,}420}&\pgfutilensuremath {2{,}090}&\pgfutilensuremath {3{,}304}&\pgfutilensuremath {3{,}314}&\pgfutilensuremath {14.50}&\pgfutilensuremath {4.45}&\pgfutilensuremath {2.28}&\pgfutilensuremath {4.44}&\pgfutilensuremath {2.63}&\pgfutilensuremath {0.69}\\%
dkc3938&\pgfutilensuremath {11.99}&\pgfutilensuremath {6{,}612}&\pgfutilensuremath {5{,}644}&\pgfutilensuremath {9{,}603}&\pgfutilensuremath {9{,}642}&\pgfutilensuremath {41.12}&\pgfutilensuremath {13.39}&\pgfutilensuremath {6.46}&\pgfutilensuremath {8.69}&\pgfutilensuremath {10.74}&\pgfutilensuremath {1.83}\\%
dkd1973&\pgfutilensuremath {10.6}&\pgfutilensuremath {1{,}987}&\pgfutilensuremath {1{,}824}&\pgfutilensuremath {5{,}054}&\pgfutilensuremath {5{,}059}&\pgfutilensuremath {16.54}&\pgfutilensuremath {6.41}&\pgfutilensuremath {2.82}&\pgfutilensuremath {5.05}&\pgfutilensuremath {1.38}&\pgfutilensuremath {0.87}\\%
dke3097&\pgfutilensuremath {12.21}&\pgfutilensuremath {5{,}831}&\pgfutilensuremath {4{,}978}&\pgfutilensuremath {7{,}401}&\pgfutilensuremath {7{,}424}&\pgfutilensuremath {33.95}&\pgfutilensuremath {10.44}&\pgfutilensuremath {5.18}&\pgfutilensuremath {6.65}&\pgfutilensuremath {10.16}&\pgfutilensuremath {1.52}\\%
dkf3954&\pgfutilensuremath {11.78}&\pgfutilensuremath {6{,}204}&\pgfutilensuremath {5{,}515}&\pgfutilensuremath {9{,}708}&\pgfutilensuremath {9{,}742}&\pgfutilensuremath {35.14}&\pgfutilensuremath {12.58}&\pgfutilensuremath {6.43}&\pgfutilensuremath {8.02}&\pgfutilensuremath {6.21}&\pgfutilensuremath {1.90}\\%
dkg813&\pgfutilensuremath {11.36}&\pgfutilensuremath {1{,}068}&\pgfutilensuremath {968}&\pgfutilensuremath {2{,}019}&\pgfutilensuremath {2{,}024}&\pgfutilensuremath {8.13}&\pgfutilensuremath {2.96}&\pgfutilensuremath {1.27}&\pgfutilensuremath {2.74}&\pgfutilensuremath {0.73}&\pgfutilensuremath {0.42}\\%
dlb3694&\pgfutilensuremath {10.98}&\pgfutilensuremath {4{,}519}&\pgfutilensuremath {4{,}220}&\pgfutilensuremath {9{,}167}&\pgfutilensuremath {9{,}174}&\pgfutilensuremath {29.09}&\pgfutilensuremath {11.51}&\pgfutilensuremath {6.13}&\pgfutilensuremath {6.98}&\pgfutilensuremath {2.50}&\pgfutilensuremath {1.96}\\%
fdp3256&\pgfutilensuremath {11.73}&\pgfutilensuremath {5{,}064}&\pgfutilensuremath {4{,}617}&\pgfutilensuremath {7{,}774}&\pgfutilensuremath {7{,}794}&\pgfutilensuremath {28.53}&\pgfutilensuremath {10.43}&\pgfutilensuremath {5.26}&\pgfutilensuremath {6.96}&\pgfutilensuremath {4.14}&\pgfutilensuremath {1.74}\\%
fea5557&\pgfutilensuremath {11.67}&\pgfutilensuremath {7{,}663}&\pgfutilensuremath {7{,}012}&\pgfutilensuremath {13{,}568}&\pgfutilensuremath {13{,}591}&\pgfutilensuremath {45.90}&\pgfutilensuremath {18.49}&\pgfutilensuremath {9.08}&\pgfutilensuremath {8.83}&\pgfutilensuremath {6.45}&\pgfutilensuremath {3.05}\\%
fht47608&\pgfutilensuremath {12.63}&\pgfutilensuremath {74{,}464}&\pgfutilensuremath {67{,}605}&\pgfutilensuremath {117{,}543}&\pgfutilensuremath {117{,}852}&\pgfutilensuremath {437.76}&\pgfutilensuremath {165.02}&\pgfutilensuremath {91.31}&\pgfutilensuremath {74.70}&\pgfutilensuremath {76.51}&\pgfutilensuremath {30.14}\\%
fjr3672&\pgfutilensuremath {9}&\pgfutilensuremath {3{,}441}&\pgfutilensuremath {3{,}303}&\pgfutilensuremath {9{,}476}&\pgfutilensuremath {9{,}480}&\pgfutilensuremath {27.06}&\pgfutilensuremath {12.90}&\pgfutilensuremath {4.41}&\pgfutilensuremath {6.01}&\pgfutilensuremath {1.94}&\pgfutilensuremath {1.80}\\%
fjs3649&\pgfutilensuremath {8.88}&\pgfutilensuremath {3{,}039}&\pgfutilensuremath {2{,}892}&\pgfutilensuremath {9{,}473}&\pgfutilensuremath {9{,}480}&\pgfutilensuremath {25.26}&\pgfutilensuremath {12.17}&\pgfutilensuremath {4.31}&\pgfutilensuremath {5.83}&\pgfutilensuremath {1.20}&\pgfutilensuremath {1.75}\\%
fma21553&\pgfutilensuremath {12.13}&\pgfutilensuremath {30{,}434}&\pgfutilensuremath {26{,}810}&\pgfutilensuremath {53{,}500}&\pgfutilensuremath {53{,}779}&\pgfutilensuremath {193.02}&\pgfutilensuremath {74.42}&\pgfutilensuremath {38.76}&\pgfutilensuremath {34.11}&\pgfutilensuremath {32.60}&\pgfutilensuremath {13.10}\\%
fna52057&\pgfutilensuremath {12.34}&\pgfutilensuremath {77{,}235}&\pgfutilensuremath {69{,}688}&\pgfutilensuremath {127{,}955}&\pgfutilensuremath {128{,}384}&\pgfutilensuremath {466.54}&\pgfutilensuremath {177.26}&\pgfutilensuremath {97.45}&\pgfutilensuremath {77.89}&\pgfutilensuremath {81.54}&\pgfutilensuremath {32.31}\\%
fnb1615&\pgfutilensuremath {12.29}&\pgfutilensuremath {2{,}688}&\pgfutilensuremath {2{,}425}&\pgfutilensuremath {4{,}004}&\pgfutilensuremath {4{,}011}&\pgfutilensuremath {17.69}&\pgfutilensuremath {5.86}&\pgfutilensuremath {2.76}&\pgfutilensuremath {4.88}&\pgfutilensuremath {3.27}&\pgfutilensuremath {0.92}\\%
fnc19402&\pgfutilensuremath {12.18}&\pgfutilensuremath {31{,}461}&\pgfutilensuremath {27{,}808}&\pgfutilensuremath {47{,}248}&\pgfutilensuremath {47{,}476}&\pgfutilensuremath {179.74}&\pgfutilensuremath {64.15}&\pgfutilensuremath {35.00}&\pgfutilensuremath {30.50}&\pgfutilensuremath {37.90}&\pgfutilensuremath {12.16}\\%
fqm5087&\pgfutilensuremath {11.41}&\pgfutilensuremath {6{,}126}&\pgfutilensuremath {5{,}818}&\pgfutilensuremath {12{,}813}&\pgfutilensuremath {12{,}829}&\pgfutilensuremath {41.30}&\pgfutilensuremath {18.92}&\pgfutilensuremath {8.13}&\pgfutilensuremath {7.90}&\pgfutilensuremath {3.58}&\pgfutilensuremath {2.77}\\%
fra1488&\pgfutilensuremath {10.83}&\pgfutilensuremath {1{,}832}&\pgfutilensuremath {1{,}752}&\pgfutilensuremath {3{,}674}&\pgfutilensuremath {3{,}680}&\pgfutilensuremath {13.26}&\pgfutilensuremath {5.21}&\pgfutilensuremath {2.20}&\pgfutilensuremath {3.98}&\pgfutilensuremath {1.00}&\pgfutilensuremath {0.86}\\%
frh19289&\pgfutilensuremath {12.18}&\pgfutilensuremath {30{,}991}&\pgfutilensuremath {27{,}494}&\pgfutilensuremath {46{,}901}&\pgfutilensuremath {47{,}047}&\pgfutilensuremath {176.38}&\pgfutilensuremath {62.45}&\pgfutilensuremath {35.20}&\pgfutilensuremath {30.29}&\pgfutilensuremath {36.82}&\pgfutilensuremath {11.59}\\%
frv4410&\pgfutilensuremath {11.51}&\pgfutilensuremath {6{,}122}&\pgfutilensuremath {5{,}494}&\pgfutilensuremath {10{,}967}&\pgfutilensuremath {10{,}987}&\pgfutilensuremath {38.18}&\pgfutilensuremath {15.13}&\pgfutilensuremath {7.01}&\pgfutilensuremath {6.34}&\pgfutilensuremath {7.27}&\pgfutilensuremath {2.42}\\%
fry33203&\pgfutilensuremath {12.23}&\pgfutilensuremath {49{,}364}&\pgfutilensuremath {44{,}365}&\pgfutilensuremath {81{,}285}&\pgfutilensuremath {81{,}586}&\pgfutilensuremath {299.88}&\pgfutilensuremath {111.15}&\pgfutilensuremath {61.90}&\pgfutilensuremath {51.71}&\pgfutilensuremath {54.22}&\pgfutilensuremath {20.84}\\%
fyg28534&\pgfutilensuremath {12.7}&\pgfutilensuremath {58{,}992}&\pgfutilensuremath {51{,}953}&\pgfutilensuremath {66{,}600}&\pgfutilensuremath {66{,}812}&\pgfutilensuremath {299.51}&\pgfutilensuremath {94.29}&\pgfutilensuremath {56.44}&\pgfutilensuremath {46.84}&\pgfutilensuremath {84.61}&\pgfutilensuremath {17.29}\\%
ics39603&\pgfutilensuremath {12.17}&\pgfutilensuremath {70{,}351}&\pgfutilensuremath {62{,}412}&\pgfutilensuremath {96{,}188}&\pgfutilensuremath {96{,}379}&\pgfutilensuremath {392.97}&\pgfutilensuremath {129.38}&\pgfutilensuremath {73.01}&\pgfutilensuremath {59.62}&\pgfutilensuremath {106.60}&\pgfutilensuremath {24.30}\\%
icw1483&\pgfutilensuremath {11.4}&\pgfutilensuremath {2{,}476}&\pgfutilensuremath {2{,}210}&\pgfutilensuremath {3{,}572}&\pgfutilensuremath {3{,}581}&\pgfutilensuremath {24.66}&\pgfutilensuremath {14.19}&\pgfutilensuremath {2.46}&\pgfutilensuremath {4.39}&\pgfutilensuremath {2.78}&\pgfutilensuremath {0.84}\\%
icx28698&\pgfutilensuremath {13.31}&\pgfutilensuremath {67{,}496}&\pgfutilensuremath {60{,}492}&\pgfutilensuremath {67{,}761}&\pgfutilensuremath {67{,}945}&\pgfutilensuremath {346.87}&\pgfutilensuremath {97.32}&\pgfutilensuremath {59.60}&\pgfutilensuremath {51.08}&\pgfutilensuremath {120.27}&\pgfutilensuremath {18.56}\\%
ida8197&\pgfutilensuremath {11.94}&\pgfutilensuremath {12{,}449}&\pgfutilensuremath {11{,}251}&\pgfutilensuremath {20{,}075}&\pgfutilensuremath {20{,}129}&\pgfutilensuremath {74.08}&\pgfutilensuremath {26.88}&\pgfutilensuremath {14.32}&\pgfutilensuremath {12.92}&\pgfutilensuremath {14.96}&\pgfutilensuremath {4.97}\\%
ido21215&\pgfutilensuremath {12.28}&\pgfutilensuremath {32{,}108}&\pgfutilensuremath {28{,}264}&\pgfutilensuremath {52{,}421}&\pgfutilensuremath {52{,}546}&\pgfutilensuremath {202.58}&\pgfutilensuremath {68.81}&\pgfutilensuremath {39.19}&\pgfutilensuremath {34.21}&\pgfutilensuremath {47.49}&\pgfutilensuremath {12.86}\\%
ird29514&\pgfutilensuremath {13.08}&\pgfutilensuremath {62{,}579}&\pgfutilensuremath {54{,}505}&\pgfutilensuremath {70{,}537}&\pgfutilensuremath {70{,}807}&\pgfutilensuremath {348.57}&\pgfutilensuremath {96.42}&\pgfutilensuremath {58.83}&\pgfutilensuremath {50.09}&\pgfutilensuremath {124.90}&\pgfutilensuremath {18.28}\\%
irw2802&\pgfutilensuremath {12.54}&\pgfutilensuremath {5{,}053}&\pgfutilensuremath {4{,}330}&\pgfutilensuremath {6{,}705}&\pgfutilensuremath {6{,}725}&\pgfutilensuremath {29.62}&\pgfutilensuremath {9.77}&\pgfutilensuremath {4.82}&\pgfutilensuremath {6.46}&\pgfutilensuremath {7.20}&\pgfutilensuremath {1.37}\\%
irx28268&\pgfutilensuremath {12.21}&\pgfutilensuremath {36{,}680}&\pgfutilensuremath {33{,}430}&\pgfutilensuremath {71{,}115}&\pgfutilensuremath {71{,}250}&\pgfutilensuremath {243.50}&\pgfutilensuremath {92.33}&\pgfutilensuremath {52.91}&\pgfutilensuremath {41.81}&\pgfutilensuremath {39.36}&\pgfutilensuremath {17.04}\\%
lap7454&\pgfutilensuremath {12.55}&\pgfutilensuremath {12{,}993}&\pgfutilensuremath {11{,}518}&\pgfutilensuremath {17{,}751}&\pgfutilensuremath {17{,}816}&\pgfutilensuremath {70.70}&\pgfutilensuremath {24.98}&\pgfutilensuremath {13.59}&\pgfutilensuremath {12.40}&\pgfutilensuremath {15.45}&\pgfutilensuremath {4.27}\\%
ley2323&\pgfutilensuremath {13.32}&\pgfutilensuremath {5{,}550}&\pgfutilensuremath {4{,}866}&\pgfutilensuremath {5{,}081}&\pgfutilensuremath {5{,}098}&\pgfutilensuremath {27.44}&\pgfutilensuremath {8.72}&\pgfutilensuremath {4.18}&\pgfutilensuremath {6.96}&\pgfutilensuremath {6.41}&\pgfutilensuremath {1.15}\\%
lim963&\pgfutilensuremath {11.18}&\pgfutilensuremath {1{,}743}&\pgfutilensuremath {1{,}506}&\pgfutilensuremath {2{,}280}&\pgfutilensuremath {2{,}290}&\pgfutilensuremath {9.86}&\pgfutilensuremath {3.04}&\pgfutilensuremath {1.47}&\pgfutilensuremath {2.76}&\pgfutilensuremath {1.95}&\pgfutilensuremath {0.63}\\%
lra498378&\pgfutilensuremath {16.03}&\pgfutilensuremath {1{,}214{,}570}&\pgfutilensuremath {1{,}061{,}438}&\pgfutilensuremath {1{,}204{,}368}&\pgfutilensuremath {1{,}209{,}111}&\pgfutilensuremath {382{,}932.00}&\pgfutilensuremath {44{,}267.10}&\pgfutilensuremath {1{,}238.36}&\pgfutilensuremath {7{,}532.35}&\pgfutilensuremath {329{,}292.00}&\pgfutilensuremath {599.16}\\%
lrb744710&\pgfutilensuremath {12.02}&\pgfutilensuremath {1{,}233{,}436}&\pgfutilensuremath {1{,}073{,}199}&\pgfutilensuremath {1{,}872{,}986}&\pgfutilensuremath {1{,}879{,}647}&\pgfutilensuremath {484{,}430.00}&\pgfutilensuremath {7{,}952.68}&\pgfutilensuremath {1{,}377.69}&\pgfutilensuremath {2{,}661.36}&\pgfutilensuremath {471{,}564.00}&\pgfutilensuremath {872.17}\\%
lsb22777&\pgfutilensuremath {12.55}&\pgfutilensuremath {30{,}416}&\pgfutilensuremath {27{,}355}&\pgfutilensuremath {57{,}192}&\pgfutilensuremath {57{,}326}&\pgfutilensuremath {206.72}&\pgfutilensuremath {76.85}&\pgfutilensuremath {43.53}&\pgfutilensuremath {36.48}&\pgfutilensuremath {35.91}&\pgfutilensuremath {13.91}\\%
lsm2854&\pgfutilensuremath {11.7}&\pgfutilensuremath {4{,}297}&\pgfutilensuremath {3{,}865}&\pgfutilensuremath {7{,}044}&\pgfutilensuremath {7{,}062}&\pgfutilensuremath {25.38}&\pgfutilensuremath {9.27}&\pgfutilensuremath {4.63}&\pgfutilensuremath {5.92}&\pgfutilensuremath {4.07}&\pgfutilensuremath {1.49}\\%
lsn3119&\pgfutilensuremath {11.64}&\pgfutilensuremath {5{,}203}&\pgfutilensuremath {4{,}566}&\pgfutilensuremath {7{,}537}&\pgfutilensuremath {7{,}567}&\pgfutilensuremath {28.84}&\pgfutilensuremath {9.86}&\pgfutilensuremath {4.99}&\pgfutilensuremath {6.58}&\pgfutilensuremath {5.75}&\pgfutilensuremath {1.66}\\%
lta3140&\pgfutilensuremath {12.65}&\pgfutilensuremath {5{,}329}&\pgfutilensuremath {4{,}815}&\pgfutilensuremath {7{,}620}&\pgfutilensuremath {7{,}645}&\pgfutilensuremath {31.43}&\pgfutilensuremath {10.54}&\pgfutilensuremath {5.49}&\pgfutilensuremath {8.17}&\pgfutilensuremath {5.57}&\pgfutilensuremath {1.66}\\%
ltb3729&\pgfutilensuremath {11.59}&\pgfutilensuremath {5{,}862}&\pgfutilensuremath {5{,}138}&\pgfutilensuremath {9{,}073}&\pgfutilensuremath {9{,}111}&\pgfutilensuremath {34.51}&\pgfutilensuremath {12.24}&\pgfutilensuremath {5.90}&\pgfutilensuremath {7.89}&\pgfutilensuremath {6.70}&\pgfutilensuremath {1.78}\\%
mlt2597&\pgfutilensuremath {12.15}&\pgfutilensuremath {5{,}025}&\pgfutilensuremath {4{,}440}&\pgfutilensuremath {6{,}120}&\pgfutilensuremath {6{,}138}&\pgfutilensuremath {27.34}&\pgfutilensuremath {8.60}&\pgfutilensuremath {4.36}&\pgfutilensuremath {6.13}&\pgfutilensuremath {6.92}&\pgfutilensuremath {1.33}\\%
pba38478&\pgfutilensuremath {12.66}&\pgfutilensuremath {68{,}443}&\pgfutilensuremath {60{,}780}&\pgfutilensuremath {93{,}518}&\pgfutilensuremath {93{,}779}&\pgfutilensuremath {393.44}&\pgfutilensuremath {129.71}&\pgfutilensuremath {73.83}&\pgfutilensuremath {60.54}&\pgfutilensuremath {105.28}&\pgfutilensuremath {24.01}\\%
pbd984&\pgfutilensuremath {10.77}&\pgfutilensuremath {1{,}230}&\pgfutilensuremath {1{,}126}&\pgfutilensuremath {2{,}467}&\pgfutilensuremath {2{,}472}&\pgfutilensuremath {8.97}&\pgfutilensuremath {3.00}&\pgfutilensuremath {1.47}&\pgfutilensuremath {2.99}&\pgfutilensuremath {0.94}&\pgfutilensuremath {0.57}\\%
pbh30440&\pgfutilensuremath {12.69}&\pgfutilensuremath {50{,}742}&\pgfutilensuremath {46{,}520}&\pgfutilensuremath {73{,}910}&\pgfutilensuremath {74{,}078}&\pgfutilensuremath {281.17}&\pgfutilensuremath {100.29}&\pgfutilensuremath {58.66}&\pgfutilensuremath {51.91}&\pgfutilensuremath {52.26}&\pgfutilensuremath {17.99}\\%
pbk411&\pgfutilensuremath {11.29}&\pgfutilensuremath {482}&\pgfutilensuremath {450}&\pgfutilensuremath {1{,}013}&\pgfutilensuremath {1{,}014}&\pgfutilensuremath {3.78}&\pgfutilensuremath {1.34}&\pgfutilensuremath {0.65}&\pgfutilensuremath {1.34}&\pgfutilensuremath {0.21}&\pgfutilensuremath {0.25}\\%
pbl395&\pgfutilensuremath {10.86}&\pgfutilensuremath {427}&\pgfutilensuremath {389}&\pgfutilensuremath {997}&\pgfutilensuremath {1{,}000}&\pgfutilensuremath {4.40}&\pgfutilensuremath {2.06}&\pgfutilensuremath {0.58}&\pgfutilensuremath {1.22}&\pgfutilensuremath {0.31}&\pgfutilensuremath {0.23}\\%
pbm436&\pgfutilensuremath {11.41}&\pgfutilensuremath {543}&\pgfutilensuremath {491}&\pgfutilensuremath {1{,}081}&\pgfutilensuremath {1{,}083}&\pgfutilensuremath {4.15}&\pgfutilensuremath {1.34}&\pgfutilensuremath {0.69}&\pgfutilensuremath {1.43}&\pgfutilensuremath {0.44}&\pgfutilensuremath {0.25}\\%
pbn423&\pgfutilensuremath {11.26}&\pgfutilensuremath {538}&\pgfutilensuremath {494}&\pgfutilensuremath {1{,}036}&\pgfutilensuremath {1{,}038}&\pgfutilensuremath {3.91}&\pgfutilensuremath {1.36}&\pgfutilensuremath {0.68}&\pgfutilensuremath {1.28}&\pgfutilensuremath {0.35}&\pgfutilensuremath {0.24}\\%
pds2566&\pgfutilensuremath {12.31}&\pgfutilensuremath {4{,}431}&\pgfutilensuremath {3{,}937}&\pgfutilensuremath {6{,}132}&\pgfutilensuremath {6{,}143}&\pgfutilensuremath {25.77}&\pgfutilensuremath {8.46}&\pgfutilensuremath {4.40}&\pgfutilensuremath {6.40}&\pgfutilensuremath {5.30}&\pgfutilensuremath {1.21}\\%
pia3056&\pgfutilensuremath {11.78}&\pgfutilensuremath {4{,}732}&\pgfutilensuremath {4{,}342}&\pgfutilensuremath {7{,}539}&\pgfutilensuremath {7{,}551}&\pgfutilensuremath {26.54}&\pgfutilensuremath {9.15}&\pgfutilensuremath {5.01}&\pgfutilensuremath {6.22}&\pgfutilensuremath {4.56}&\pgfutilensuremath {1.59}\\%
pjh17845&\pgfutilensuremath {12.31}&\pgfutilensuremath {24{,}219}&\pgfutilensuremath {22{,}140}&\pgfutilensuremath {44{,}814}&\pgfutilensuremath {44{,}894}&\pgfutilensuremath {153.80}&\pgfutilensuremath {57.02}&\pgfutilensuremath {33.19}&\pgfutilensuremath {28.21}&\pgfutilensuremath {24.68}&\pgfutilensuremath {10.68}\\%
pka379&\pgfutilensuremath {9.53}&\pgfutilensuremath {317}&\pgfutilensuremath {309}&\pgfutilensuremath {977}&\pgfutilensuremath {977}&\pgfutilensuremath {3.48}&\pgfutilensuremath {1.77}&\pgfutilensuremath {0.48}&\pgfutilensuremath {0.93}&\pgfutilensuremath {0.10}&\pgfutilensuremath {0.21}\\%
pma343&\pgfutilensuremath {9.29}&\pgfutilensuremath {291}&\pgfutilensuremath {273}&\pgfutilensuremath {905}&\pgfutilensuremath {905}&\pgfutilensuremath {3.22}&\pgfutilensuremath {1.58}&\pgfutilensuremath {0.43}&\pgfutilensuremath {0.81}&\pgfutilensuremath {0.22}&\pgfutilensuremath {0.18}\\%
rbu737&\pgfutilensuremath {9.96}&\pgfutilensuremath {1{,}270}&\pgfutilensuremath {1{,}114}&\pgfutilensuremath {1{,}751}&\pgfutilensuremath {1{,}757}&\pgfutilensuremath {6.48}&\pgfutilensuremath {2.26}&\pgfutilensuremath {1.01}&\pgfutilensuremath {1.69}&\pgfutilensuremath {1.10}&\pgfutilensuremath {0.42}\\%
rbv1583&\pgfutilensuremath {11.34}&\pgfutilensuremath {1{,}937}&\pgfutilensuremath {1{,}834}&\pgfutilensuremath {3{,}920}&\pgfutilensuremath {3{,}926}&\pgfutilensuremath {14.11}&\pgfutilensuremath {5.49}&\pgfutilensuremath {2.48}&\pgfutilensuremath {4.30}&\pgfutilensuremath {0.94}&\pgfutilensuremath {0.90}\\%
rbw2481&\pgfutilensuremath {12.3}&\pgfutilensuremath {4{,}500}&\pgfutilensuremath {4{,}007}&\pgfutilensuremath {5{,}881}&\pgfutilensuremath {5{,}900}&\pgfutilensuremath {26.73}&\pgfutilensuremath {8.56}&\pgfutilensuremath {4.18}&\pgfutilensuremath {6.28}&\pgfutilensuremath {6.45}&\pgfutilensuremath {1.25}\\%
rbx711&\pgfutilensuremath {10.19}&\pgfutilensuremath {936}&\pgfutilensuremath {872}&\pgfutilensuremath {1{,}698}&\pgfutilensuremath {1{,}701}&\pgfutilensuremath {5.57}&\pgfutilensuremath {2.02}&\pgfutilensuremath {0.99}&\pgfutilensuremath {1.62}&\pgfutilensuremath {0.51}&\pgfutilensuremath {0.44}\\%
rby1599&\pgfutilensuremath {11.25}&\pgfutilensuremath {2{,}006}&\pgfutilensuremath {1{,}879}&\pgfutilensuremath {3{,}946}&\pgfutilensuremath {3{,}952}&\pgfutilensuremath {13.65}&\pgfutilensuremath {5.26}&\pgfutilensuremath {2.49}&\pgfutilensuremath {3.94}&\pgfutilensuremath {1.07}&\pgfutilensuremath {0.89}\\%
rbz43748&\pgfutilensuremath {12.31}&\pgfutilensuremath {75{,}181}&\pgfutilensuremath {67{,}511}&\pgfutilensuremath {106{,}454}&\pgfutilensuremath {106{,}775}&\pgfutilensuremath {405.81}&\pgfutilensuremath {133.68}&\pgfutilensuremath {83.08}&\pgfutilensuremath {66.30}&\pgfutilensuremath {96.36}&\pgfutilensuremath {26.32}\\%
sra104815&\pgfutilensuremath {11.79}&\pgfutilensuremath {168{,}135}&\pgfutilensuremath {150{,}131}&\pgfutilensuremath {255{,}919}&\pgfutilensuremath {257{,}035}&\pgfutilensuremath {1{,}937.60}&\pgfutilensuremath {559.21}&\pgfutilensuremath {191.32}&\pgfutilensuremath {198.95}&\pgfutilensuremath {922.19}&\pgfutilensuremath {65.76}\\%
xia16928&\pgfutilensuremath {12.2}&\pgfutilensuremath {30{,}034}&\pgfutilensuremath {27{,}024}&\pgfutilensuremath {41{,}132}&\pgfutilensuremath {41{,}290}&\pgfutilensuremath {170.36}&\pgfutilensuremath {58.51}&\pgfutilensuremath {31.51}&\pgfutilensuremath {28.97}&\pgfutilensuremath {40.82}&\pgfutilensuremath {10.52}\\%
xib32892&\pgfutilensuremath {12.57}&\pgfutilensuremath {56{,}931}&\pgfutilensuremath {50{,}078}&\pgfutilensuremath {80{,}124}&\pgfutilensuremath {80{,}512}&\pgfutilensuremath {329.57}&\pgfutilensuremath {108.02}&\pgfutilensuremath {62.72}&\pgfutilensuremath {54.01}&\pgfutilensuremath {84.03}&\pgfutilensuremath {20.75}\\%
xit1083&\pgfutilensuremath {10.76}&\pgfutilensuremath {1{,}405}&\pgfutilensuremath {1{,}322}&\pgfutilensuremath {2{,}681}&\pgfutilensuremath {2{,}686}&\pgfutilensuremath {9.20}&\pgfutilensuremath {3.37}&\pgfutilensuremath {1.59}&\pgfutilensuremath {2.88}&\pgfutilensuremath {0.80}&\pgfutilensuremath {0.56}\\%
xmc10150&\pgfutilensuremath {13.77}&\pgfutilensuremath {29{,}050}&\pgfutilensuremath {25{,}112}&\pgfutilensuremath {22{,}906}&\pgfutilensuremath {23{,}001}&\pgfutilensuremath {159.13}&\pgfutilensuremath {35.94}&\pgfutilensuremath {20.61}&\pgfutilensuremath {20.02}&\pgfutilensuremath {75.77}&\pgfutilensuremath {6.77}\\%
xpr2308&\pgfutilensuremath {12.34}&\pgfutilensuremath {5{,}094}&\pgfutilensuremath {4{,}303}&\pgfutilensuremath {5{,}327}&\pgfutilensuremath {5{,}353}&\pgfutilensuremath {27.51}&\pgfutilensuremath {7.81}&\pgfutilensuremath {3.96}&\pgfutilensuremath {5.73}&\pgfutilensuremath {8.77}&\pgfutilensuremath {1.24}\\%
xqc2175&\pgfutilensuremath {11.74}&\pgfutilensuremath {3{,}909}&\pgfutilensuremath {3{,}331}&\pgfutilensuremath {5{,}206}&\pgfutilensuremath {5{,}228}&\pgfutilensuremath {23.40}&\pgfutilensuremath {6.90}&\pgfutilensuremath {3.50}&\pgfutilensuremath {4.95}&\pgfutilensuremath {6.91}&\pgfutilensuremath {1.13}\\%
xqd4966&\pgfutilensuremath {11.02}&\pgfutilensuremath {5{,}965}&\pgfutilensuremath {5{,}412}&\pgfutilensuremath {12{,}554}&\pgfutilensuremath {12{,}578}&\pgfutilensuremath {40.84}&\pgfutilensuremath {18.19}&\pgfutilensuremath {7.55}&\pgfutilensuremath {8.24}&\pgfutilensuremath {4.44}&\pgfutilensuremath {2.42}\\%
xqe3891&\pgfutilensuremath {12.76}&\pgfutilensuremath {8{,}054}&\pgfutilensuremath {6{,}894}&\pgfutilensuremath {9{,}112}&\pgfutilensuremath {9{,}161}&\pgfutilensuremath {44.74}&\pgfutilensuremath {14.24}&\pgfutilensuremath {6.84}&\pgfutilensuremath {9.80}&\pgfutilensuremath {11.83}&\pgfutilensuremath {2.03}\\%
xqf131&\pgfutilensuremath {8.89}&\pgfutilensuremath {154}&\pgfutilensuremath {151}&\pgfutilensuremath {316}&\pgfutilensuremath {316}&\pgfutilensuremath {1.48}&\pgfutilensuremath {0.79}&\pgfutilensuremath {0.16}&\pgfutilensuremath {0.35}&\pgfutilensuremath {0.05}&\pgfutilensuremath {0.13}\\%
xqg237&\pgfutilensuremath {8.99}&\pgfutilensuremath {328}&\pgfutilensuremath {297}&\pgfutilensuremath {569}&\pgfutilensuremath {570}&\pgfutilensuremath {2.23}&\pgfutilensuremath {1.05}&\pgfutilensuremath {0.29}&\pgfutilensuremath {0.53}&\pgfutilensuremath {0.18}&\pgfutilensuremath {0.17}\\%
xql662&\pgfutilensuremath {11.52}&\pgfutilensuremath {1{,}362}&\pgfutilensuremath {1{,}139}&\pgfutilensuremath {1{,}539}&\pgfutilensuremath {1{,}544}&\pgfutilensuremath {8.01}&\pgfutilensuremath {2.56}&\pgfutilensuremath {1.06}&\pgfutilensuremath {1.88}&\pgfutilensuremath {2.11}&\pgfutilensuremath {0.39}\\%
xrb14233&\pgfutilensuremath {12.13}&\pgfutilensuremath {25{,}261}&\pgfutilensuremath {22{,}694}&\pgfutilensuremath {34{,}180}&\pgfutilensuremath {34{,}263}&\pgfutilensuremath {134.89}&\pgfutilensuremath {44.26}&\pgfutilensuremath {25.84}&\pgfutilensuremath {22.67}&\pgfutilensuremath {34.21}&\pgfutilensuremath {7.89}\\%
xrh24104&\pgfutilensuremath {12.7}&\pgfutilensuremath {42{,}058}&\pgfutilensuremath {37{,}583}&\pgfutilensuremath {58{,}217}&\pgfutilensuremath {58{,}436}&\pgfutilensuremath {238.18}&\pgfutilensuremath {83.94}&\pgfutilensuremath {46.79}&\pgfutilensuremath {42.07}&\pgfutilensuremath {50.93}&\pgfutilensuremath {14.42}\\%
xsc6880&\pgfutilensuremath {12.86}&\pgfutilensuremath {11{,}553}&\pgfutilensuremath {9{,}858}&\pgfutilensuremath {16{,}738}&\pgfutilensuremath {16{,}814}&\pgfutilensuremath {68.72}&\pgfutilensuremath {23.64}&\pgfutilensuremath {12.89}&\pgfutilensuremath {12.18}&\pgfutilensuremath {16.44}&\pgfutilensuremath {3.57}\\%
xua3937&\pgfutilensuremath {13.35}&\pgfutilensuremath {10{,}602}&\pgfutilensuremath {8{,}898}&\pgfutilensuremath {9{,}160}&\pgfutilensuremath {9{,}192}&\pgfutilensuremath {71.49}&\pgfutilensuremath {14.24}&\pgfutilensuremath {7.29}&\pgfutilensuremath {11.38}&\pgfutilensuremath {36.12}&\pgfutilensuremath {2.46}\\%
xva2993&\pgfutilensuremath {11.8}&\pgfutilensuremath {4{,}584}&\pgfutilensuremath {3{,}969}&\pgfutilensuremath {7{,}354}&\pgfutilensuremath {7{,}382}&\pgfutilensuremath {31.16}&\pgfutilensuremath {10.35}&\pgfutilensuremath {4.90}&\pgfutilensuremath {6.62}&\pgfutilensuremath {7.76}&\pgfutilensuremath {1.52}\\%
xvb13584&\pgfutilensuremath {12.28}&\pgfutilensuremath {19{,}334}&\pgfutilensuremath {17{,}576}&\pgfutilensuremath {33{,}872}&\pgfutilensuremath {33{,}940}&\pgfutilensuremath {118.64}&\pgfutilensuremath {44.98}&\pgfutilensuremath {24.82}&\pgfutilensuremath {22.13}&\pgfutilensuremath {18.47}&\pgfutilensuremath {8.22}\\%
\end {longtabu}\egroup %
\endgroup %

\end{landscape}

\section{MWT of Mona Lisa}
\label{app::mona_lisa}

The Mona Lisa TSP instance contains 100,000 points and was created by Robert Bosch in February 2009. It yields a representation of Leonardo da Vinci's Mona Lisa as a continuous-line drawing. 
The optimal TSP tour is currently unknown, however, a minimum weight triangulation can be obtained with the LMT-heuristic in less than a second. The MWT of Mona Lisa is shown in Figure \ref{fig::mona_lisa}.

\begin{figure}[h]
\includegraphics[width=\columnwidth]{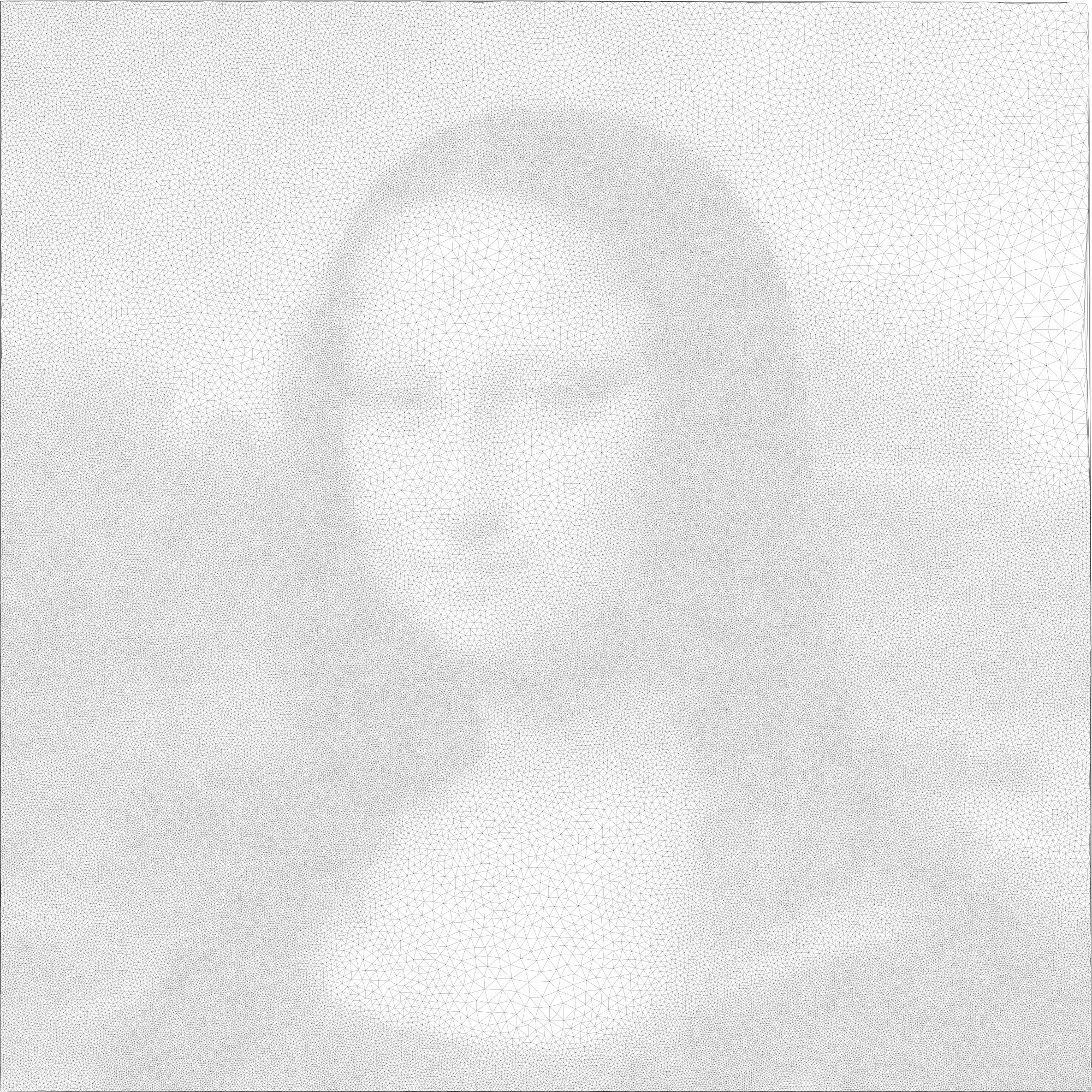}
\caption{MWT of Mona Lisa}
\label{fig::mona_lisa}
\end{figure}

Her weight, you ask? She won't tell $\dots$

\newpage
\section{Pseudo-Angles}
\label{app::pseudo_angles}

In terms of performance, the pseudo-angle computation is 30 to 40 times faster than \texttt{atan2}. (Code compiled with gcc 5.4.0 and tested on an Intel i5-3210M, Intel i7-4770 and Intel i7-6770K). 
The impact on the overall runtime of the diamond test is nearly a factor of 2.

\begin{figure}[h]
	\begin{lstlisting}[language=C++]
	double pseudo_angle(double x, double y) {
	return std::copysign(
	1.0 - x / (std::fabs(x) + std::fabs(y)), y);
	}
	\end{lstlisting}
	\caption{Pseudo angle function in C++.}
	\label{code::pseudo_angle}
\end{figure}

The above function has the same general structure as \texttt{atan2}: they monotonically increase in the intervals $[0, \pi]$, $(\pi, 2\pi)$ and have their discontinuity at $\pi$, where they jump from positive to negative values. See Figure \ref{plot::pseudo_angle} for a plot.

\begin{figure}[h]
	\includegraphics[width=\columnwidth]{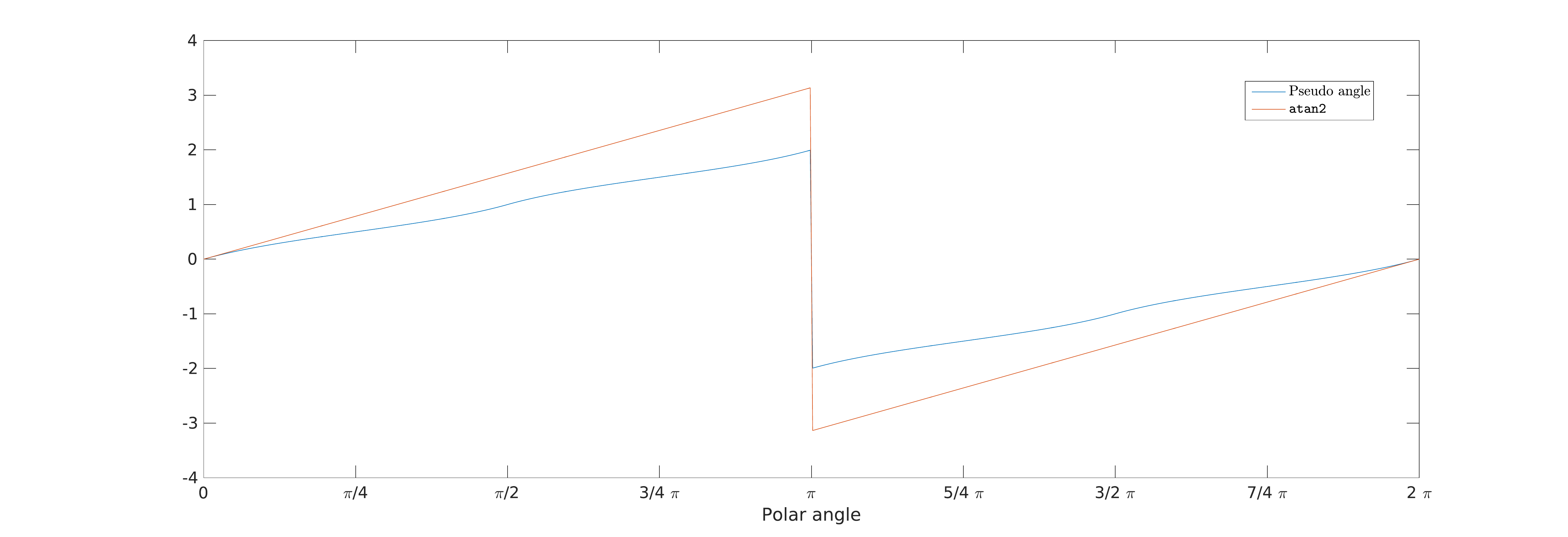}
	\caption{\texttt{atan2} versus pseudo-angle.}
	\label{plot::pseudo_angle}
\end{figure}

Beware that pseudo-angles are not rotationally invariant, i.e., a constant angle difference between two arbitrary vectors is not constant when measured with pseudo-angles. For example, a real polar angle of $\pi/4.6$ corresponds to a pseudo-angle of $\approx 0.4486$, but two arbitrary vectors that are $\pi/4.6$ apart can differ by as much as $\approx 0.525$ in their pseudo-angles.

\end{document}